\newtheorem{definition}{Definition}{\itshape}{\rmfamily}
\newcommand{\RR}{\mathbb{R}}
\newcommand{\DB}{\mathcal{D}}
\newcommand{\SSS}{\mathcal{S}}
\newcommand{\SSSo}{\ensuremath{\mathcal{S}^{o}}}
\newcommand{\SSSoo}{\ensuremath{\mathcal{S}^{o}\setminus\{o'\}}}
\newcommand{\SSSX}{\ensuremath{\mathcal{S}^{o_X}}}
\newcommand{\SSSY}{\ensuremath{\mathcal{S}^{o_Y}}}
\newcommand{\SSSXY}{\ensuremath{\mathcal{S}^{o_X}\setminus\{o_Y\}}}
\newcommand{\SSSYX}{\ensuremath{\mathcal{S}^{o_Y}\setminus\{o_X\}}}
\newcommand{\SSSCI}{\ensuremath{\mathcal{S}^{y}}}
\newcommand{\PRaUD}{\textbf{PSR}}
\newcommand{\Feifei}{\textbf{YLKS}}
\renewcommand{\P}{\ensuremath{P}} 
\newcommand{\p}{\ensuremath{\hat{P}}} 
\newtheorem{lemma}{Lemma}{\itshape}{\rmfamily}
{\itshape}{\rmfamily}
\begin{document}

\title{Scalable Probabilistic Similarity Ranking in Uncertain Databases (Technical Report)}

\author{Thomas Bernecker,
        Hans-Peter Kriegel,
        Nikos Mamoulis,
        Matthias Renz,
        and~Andreas~Zuefle
\thanks{}
\thanks{T. Bernecker, H.P. Kriegel, M. Renz, and A. Zuefle are with the Ludwig-Maximilians-Universit\"{a}t,
M\"{u}nchen, Germany. E-mail:
\{bernecker,kriegel,renz,zuefle\}@dbs.ifi.lmu.de. N. Mamoulis is
with the University of Hong Kong, Pokfulam Road, Hong Kong.
E-mail: nikos@cs.hku.hk.} }

\maketitle

\begin{abstract}
This paper introduces a scalable approach for probabilistic top-k
similarity ranking on uncertain vector data. Each uncertain object
is represented by a set of vector instances that are assumed to be
mutually-exclusive. The objective is to rank the uncertain data
according to their distance to a reference object. We propose a
framework that incrementally computes for each object instance and
ranking position, the probability of the object falling at that
ranking position. The resulting rank probability distribution can
serve as input for several state-of-the-art probabilistic ranking
models. Existing approaches compute this probability distribution
by applying a dynamic programming approach of quadratic
complexity. In this paper we theoretically as well as
experimentally show that our framework reduces this to a
linear-time complexity while having the same memory requirements,
facilitated by  incremental accessing of the uncertain vector
instances in increasing order of their distance to the reference
object. Furthermore, we show how the output of our method can be
used to apply probabilistic top-k ranking for the objects,
according to different state-of-the-art definitions. We conduct an
experimental evaluation on synthetic and real data, which
demonstrates the efficiency of our approach.
\end{abstract}

\section{Introduction}
\noindent In the past two decades, there has been a great deal of
interest in developing efficient and effective methods for
similarity queries in spatial, temporal, multimedia and sensor
databases. Similarity ranking is a hot topic in database research
because a large number of emerging applications require
exploratory querying on the aforementioned databases. A ranking
query orders the objects in a database with respect to their
similarity to a reference object. In a spatial database context,
nearest neighbor queries rank the contents of a spatial object set
(e.g., restaurants) in increasing order of their distance to a
reference location. In a database of images, a similarity query
ranks the feature vectors of images in increasing order of their
distance (i.e., dissimilarity) to a query image.

More recently, it has been recognized that many applications
dealing with spatial, temporal, multimedia, and sensor data have
to cope with uncertain or imprecise data. For instance, in the
spatial domain, the locations of objects usually change
continuously, thus the positions tracked by GPS devices are often
imprecise. Similarly, vectors of values collected in sensor
networks (e.g., temperature, humidity, etc.) are usually
inaccurate, due to errors in the sensing devices or time delays in
the transmission. Finally, images collected by cameras may have
errors, due to low-resolution or noise. As a consequence, there is
a need to adapt storage models and indexing/search techniques to
deal with uncertainty. There is already a volume of research on
probabilistic data models \cite{bshw-06,rds-07,sd-07,ajko-07}.

In this paper, we focus on similarity ranking of uncertain vector
data. Prior work in this direction includes
\cite{SS-CKP-03,SS-CXPSV-04,SS-TCXNKP-05,bps-06,uncertain-kkpr-06,uncertain-kkr-07,cly-09,si-09}.
In a nutshell, there are two models for capturing uncertainty of
objects in a high dimensional space. In the {\em continuous}
uncertainty model, the uncertain values of an object are
represented by a continuous probability distribution function
(pdf) within the vector space. This type of representation is
often used in applications where the uncertain values are assumed
to follow a specific probability density function (pdf), e.g. a
Gaussian distribution \cite{bps-06}. Similarity search methods
based on this model involve expensive integrations of the pdf's,
thus special approximation techniques for efficient query
processing are typically employed \cite{SS-TCXNKP-05}. In the {\em
discrete} uncertainty model, each object is represented by a
discrete set of alternative values, and each value is associated
with a probability \cite{uncertain-kkpr-06}. The main motivation
of this representation is that, in most real applications, data
are collected in a discrete form (e.g., information derived from
sensor devices). In this paper, we adopt the discrete uncertainty
model which also complies with the x-relations model used in the
\emph{Trio} system \cite{abshnsw-06}.

Consider, for example, a set of three two-dimensional objects $A$,
$B$, and $C$ (e.g., locations of mobile users), and their
corresponding uncertain instances $\{a_1,a_2\}$,
$\{b_1,b_2,b_3\}$, and $\{c_1,c_2,c_3\}$, as shown in Figure
\ref{fig:ex_simple}(a). Each instance carries a probability (shown
in brackets) and instances of the same object are
mutually-exclusive. In addition, the sum of probabilities of each
object's instances cannot exceed 1. Assume that we wish to rank
the objects $A$, $B$, and $C$ according to their distances to the
query point $q$ shown in the figure. Clearly, several rankings are
possible. In specific, each combination of object instances
defines an order. For example, for combination $\{a_1,b_1,c_1\}$
the object ranking is $(B,A,C)$ while for combination
$\{a_2,b_3,c_1\}$ the object ranking is $(A,B,C)$. Each
combination corresponds to a {\em possible world}
\cite{abshnsw-06}, whose probability can be computed by
multiplying the probabilities of the instances that comprise it,
assuming independent existence probabilities between the instances
of different objects.

\begin{figure*}[hbt]
\centering \subfigure[Object Instances]{
       \includegraphics[width=0.5\columnwidth]{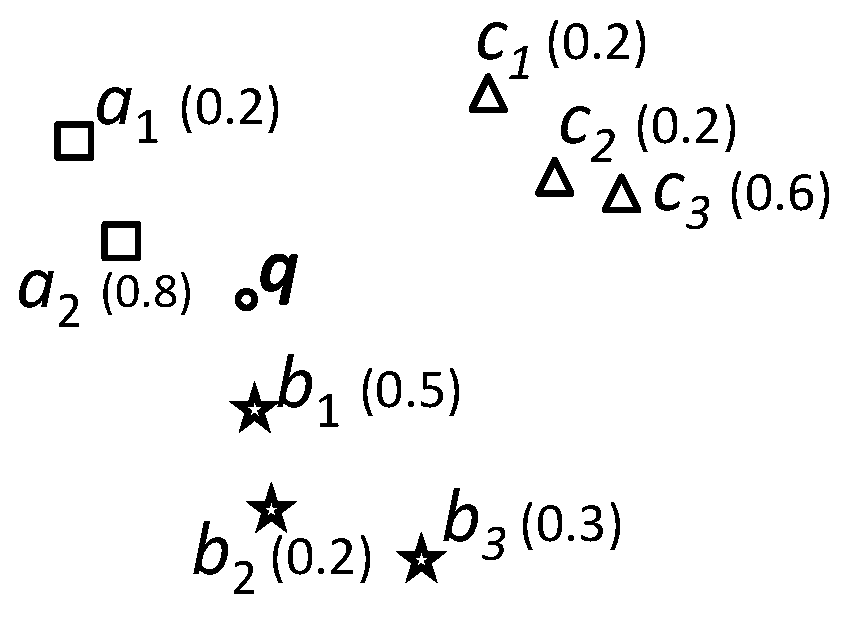}
        }\hspace{2cm}
\subfigure[Bipartite Graph]{
       \includegraphics[width=0.25\columnwidth]{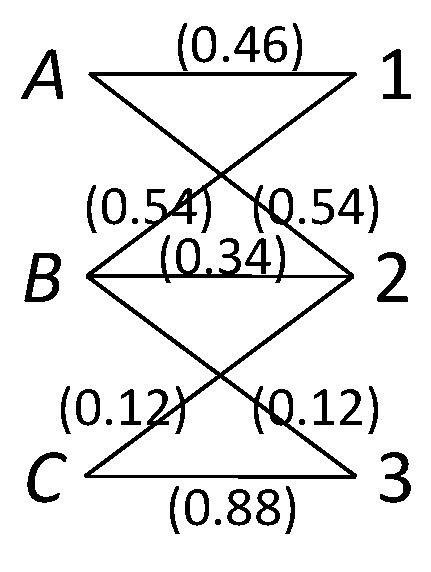}
        }
\caption{Object Instances and Rank Probability
Graph}\label{fig:ex_simple}
\end{figure*}

The example illustrates the ambiguity of ranking in uncertain
data. On the other hand, most applications require the definition
of a non-ambigous object ranking. For example, assume that a
robbery took place at location $q$ and the objects correspond to
the positions of suspects that are sampled around the time that
the robbery took place. The probabilities of the samples depend on
various factors (e.g., time-difference of the sample to the
robbery event, errors of capturing devices, etc.). As an
application, we may want to define a definite probabilistic
proximity ordering of the suspects to the event, in order to
prioritize interrogations.

Various top-$k$ query approaches have been proposed generating
un-ambiguous rankings from probabilistic data. Examples are
U-top$k$ \cite{uncertain-sic-07}, U-$k$Ranks
\cite{uncertain-sic-07}, PT-$k$ \cite{hpzl-08}, Global top-$k$
\cite{zc-08}, and expected rank \cite{cly-09}. A summary of these
ranking models can be found in \cite{cly-09}. All of them attempt
to weigh the objects based on their probability to be in each of
the first $k$ ranks, but they use different ways to define the
weights.

A common module in most of these approaches is the computation for
each object instance $x$ the probability $P_i$ that $i$ objects
are closer to $q$ than $x$ for all $1\leq i\leq k$. The resulting
probabilities are aggregated to build the probability of each
object at each rank. For example, the U-$k$Ranks query reports the
$i^{th}$ result as the object that is the most likely to be ranked
$i^{th}$ over all possible worlds. For this computation, we
obviously need the probabilities of all instances to be ranked
$i^{th}$ over all possible worlds. The probability that an object
is ranked at a specific position $i$ can be computed by summing
the probabilities of the possible worlds that support this
occurrence. In our example, the probability that object $A$ occurs
as first one is 0.46 and the probability that object $B$ is the
first is 0.54. All possible occurrences and the corresponding
probabilities are represented by the object-rank bipartite graph
which is shown in Figure \ref{fig:ex_simple}(b). Non-existing
edges imply zero probability, i.e. it is not possible that the
object occurs at the corresponding ranking position. In this
example, all instances of $A$ precede all those of $C$, so $C$
cannot occur as first object and $A$ cannot be ranked to the last
position.

In this paper, we propose a framework that, given a database with
uncertain vector objects, computes the rank probabilities of the
object instances (e.g., $a_1$) in linear time to the total number
of instances of all objects.
 --- assuming that
the instances are accessed in increasing distance order to the
query object $q$ (e.g., with the help of a nearest neighbor search
algorithm \cite{hs-95}). As these can be aggregated on-the-fly,
our framework also computes the rank probabilities of the objects
(e.g., $A$) at the same cost. This is a great improvement, over
the state-of-the-art \cite{tkde-ylks-08}, which computes these
probabilities in quadratic time.

\subsection{Problem Definition}
\label{sec:problemDefinition}

Analogously to the \emph{Trio} \cite{abshnsw-06} system, we define
an uncertain database as a set of uncertain objects (x-tuples),
each including a number of alternatives associated with
probabilities. Here, we consider uncertain vector objects in a
$d$-dimensional vector space, i.e., each object is assigned to
multiple alternative positions associated with a probability
value. Let us note that this model assumes independence among the
uncertain objects.

\begin{definition}[Uncertain Vector Objects]
\label{def:uncertainVectorObjects} An uncertain vector object $o$
corresponds to a finite set of alternative points in a
$d$-dimensional vector space, called \emph{object instances}, each
associated with a probability value, i.e., $o = \{(x,p)$, where
$x\in\RR^d$, and $p\in[0,1]\}$ is the probability that $o$ has
position $x$. The probabilities of the object instances represent
a discrete probability distribution of the alternative points,
such that the condition $\sum_{(x,p)\in o}p \leq 1$ holds. The
collection of instances of all objects forms the uncertain
database $\DB$.
\end{definition}

Note that the condition $\sum_{(x,p)\in o}p < 1$ implies
existential uncertainty, meaning that the object may not exist at
all. We assume that the database objects are already given in the
discrete representation as specified above. In case of an
uncertain database where the uncertain objects are represented by
a continuous probability function (pdf), the generally applicable
concept of sampling can be used to transform the objects to the
discrete representation as defined above.

Given a database of uncertain vector objects, our goal is to
compute for each object instance and for the first $k$ rank
positions, the probability of the object to be in that position.

\begin{definition}[Rank Probability]
\label{def:probabilisticRanking} Given a query point $q$, an
instance $x\in\DB$, belonging to object $o$, and a rank
$i\in\{1,\dots,k\}$, the \emph{probabilistic rank} $p\_rank_q:
(\DB\times\{1,\dots,k\})\rightarrow [0,1]$ reports how likely
$(i-1)$ objects $o'\ne o$ are closer to $q$ than $x$, i.e., the
probability that $x$ is at the $i^{th}$ ranking position according
to the distance (i.e., dissimilarity) between $x$ and $q$.
\end{definition}

Since the number of possible worlds is exponential in the number
of uncertain objects, it is impractical to enumerate all of them
in order to find the rank probabilities of all object instances.
Recently, it has been shown in \cite{uncertain-ylks-08} that we
can compute the probabilities between all object instances and
ranks in $O(kn^2)$ time, where $n$ is the number of object
instances required to be accessed until the solution is confirmed.
This solution can be applied to all problems that comply to the
x-relation model (including our problem). In this paper, we
propose a significant improvement of this approach, which reduces
the time complexity to $O(kn)$.

In Section \ref{sec:ProbabilisticRankingUncertainObjects}, we
discuss in detail how our method can be used as a module in
various models that rank the objects according to the rank
probabilities of their instances.

Although in the paper, we focus on databases of uncertain objects
as in Definition \ref{def:uncertainVectorObjects}, our results
apply in general to x-relations as defined in \cite{abshnsw-06},
which model mutual-exclusiveness constraints between existentially
uncertain tuples (i.e., object instances in our model). Thus, our
method is general and it can be used irrespectively to whether we
have uncertain objects or existentially uncertain tuples with
exclusiveness constraints, expressed by x-tuples.

\subsection{Contributions and Outline}

The main contributions of this paper can be summarized as follows:
\begin{itemize}

    \item 
    We propose a framework
    based on iterative distance browsing that efficiently supports probabilistic similarity ranking in uncertain vector databases.

    \item 
    We present a novel and theoretically founded
    approach for computing the rank probabilities of each object. We prove that our method reduces the
    computational cost of the rank probabilities from O($kn^2$), achieved by
    the best currently known method, to O($kn$).

    \item 
    We show
    how diverse state-of-the-art probabilistic ranking models can use our framework to accelerate computation.

    \item 
    We conduct an experimental evaluation, using real and synthetic data, which demonstrates the applicability of our framework and verifies
our theoretical findings.

\end{itemize}

The rest of the paper is organized as follows: In the next
section, we survey existing work in the field of managing and
querying uncertain data. In Section
\ref{sec:probsimrankingmethods}, we introduce our framework for
computing the rank probabilities of uncertain object instances,
followed by the details regarding the efficient incremental rank
probability computation for each object instance.

The complete algorithm for computing the rank probabilities for
all instances and the corresponding objects is presented in
Section \ref{sec:Algorithm}. We experimentally evaluate the
efficiency of our approach in Section \ref{sec:Experiments} and
conclude the paper in Section \ref{sec:conclusions}.

\section{Related Work}
\label{sec:relatedWork}

The potential of uncertain data processing has achieved increasing
interest in diverse application fields, e.g., sensor monitoring
\cite{csp-05}, traffic analysis and location-based services
\cite{wscy-99}, etc.

By now, uncertain data management has been established as an
important branch of research within the database community, with
increasing tendency. Existing approaches in this field of
modelling of, managing of and query processing on uncertain data
can be categorized into diverse directions, including
probabilistic databases \cite{bshw-06,rds-07,sd-07,ajko-07},
indexing of uncertain data \cite{SS-CXPSV-04,SS-TCXNKP-05,bps-06}
and probabilistic query processing
\cite{SS-CKP-03,ds-07,bps-06,uncertain-kkpr-06,bkr-08,tkde-ylks-08,uncertain-sic-07}.

Probabilistic databases usually relate to probabilistic relational
data, i.e. relations with uncertain tuples \cite{ds-07}, and use
the possible worlds semantic \cite{ajko-07} which is a probability
distribution on all possible database instances; a database
instance corresponds to a subset of uncertain tuples. In the
general model, the possible worlds are constrained by rules that
are defined on the tuples in order to incorporate object (tuple)
correlations \cite{sd-07}. The ULDB model proposed in
\cite{bshw-06} and used in the \emph{Trio} \cite{abshnsw-06}
system supports uncertain tuples with alternative instances which
are called x-tuples. Relations in ULDB are called x-relations
containing a set of x-tuples. Each x-tuple corresponds to a set of
tuple instances which are assumed to be mutually exclusive, i.e.
no more than one instance of an x-tuple can appear in a possible
world instance at the same time. This probabilistic data model
closes the gap between two prevalent uncertainty models, the
\emph{tuple uncertainty} \cite{ds-07} and the \emph{attribute
uncertainty} \cite{SS-CKP-03}. An x-tuple is able to model an
object with attribute value uncertainty; i.e., the instances of an
x-tuple represent the probability value distribution of the
corresponding uncertain attribute.

In this paper, we adopt this concept to model uncertain vector
objects. An uncertain vector object would correspond to an x-tuple
of alternative uncertain instances of the object. Several
approaches for indexing uncertain vector objects have been
proposed \cite{SS-CXPSV-04,SS-TCXNKP-05,bps-06,YiuMDTV09}. They
mainly differ in the uncertainty model supported and in the type
of supported similarity queries. In \cite{bps-06}, the Gauss-tree
is introduced, which is an index for managing large amounts of
uncertain objects with their uncertain attribute represented by a
Gaussian distribution function. The proposed system aims at
efficiently answering \emph{identification queries} like ``Give me
all persons in the database that could be shown on a given image
with a probability of at least 10\%''. Additionally, \cite{bps-06}
proposed \emph{probabilistic identification queries} which are
based on a Bayesian setting. Later, in \cite{bps-06} an approach
for incrementally retrieving the $k$ most likely uncertain objects
that might be placed in a given query interval is proposed. Note
that this definition is sematically different than the problem
studied in this paper.

In \cite{bps-06}, objects which have the highest probability of
being located inside a given query range are reported. In
contrast, the approaches for managing uncertain vector objects
proposed in \cite{SS-CKP-03,SS-CXPSV-04,SS-TCXNKP-05} support
arbitrarily shaped probability distribution functions for
uncertain object attributes. Similar to \cite{bps-06}, the
approaches in \cite{SS-CXPSV-04,SS-TCXNKP-05} focus on probability
computations based on query predicates according to a given query
range and, thus, are not applicable for our problem. Although
\cite{YiuMDTV09} studies probabilistic ranking of objects
according to their distance from a reference query point, the
solutions are limited to existentially uncertain spatial data with
a single alternative.

We can categorize existing probabilistic querying approaches
according to the uncertainty model they use. While probabilistic
similarity queries over uncertain vector data are dedicated to the
attribute value uncertainty model
\cite{SS-CKP-03,uncertain-kkpr-06}, probabilistic top-$k$ query
approaches are usually associated with tuple uncertainty data in
probabilistic databases
\cite{uncertain-sic-07,uncertain-ylks-08,rds-07,tkde-ylks-08}.
There exists a third probabilistic query category concerning
spatially extended uncertain data as proposed in \cite{ls-08}. But
there is only little work in this direction.

To the best of our knowledge, only \cite{bkr-08} addresses
probabilistic ranking according to our problem definition. There,
a divide and conquer method for accelerating the computation of
the ranking probabilities is proposed. Although the proposed
approach achieves a significant speed-up compared to the naive
solution incorporating each possible database instance, its
runtime is still exponential. Related to our ranking problem,
significant work has been done in the field of probabilistic
top-$k$ query processing. Soliman et al. \cite{uncertain-sic-07}
were the first who studied such problems on the x-relations model
of \cite{bshw-06}. They proposed two ways of ranking uncertain
tuples. In the first, \emph{uncertain top-$k$} (U-Top$k$) query,
the objective is to find the $k$-permutation of the most likely
tuples to be the top-$k$. In our setting, this corresponds to
finding the top-$k$ most probable object instances (belonging to
different objects) in all possible worlds. The \emph{uncertain
$k$-ranks query} (U-$k$Ranks) reports a probabilistic ranking of
the tuples (again, {\em not} the x-tuples). However, an efficient
approach for this problem is only given for the case where the
tuples are mutually independent which does not hold for the
x-relation model. At the same time Re et al. proposed in
\cite{rds-07} an efficient but approximative probabilistic ranking
based on the concept of Monte-Carlo simulation. Later, Yi et al.
proposed in
 \cite{tkde-ylks-08}
the first efficient exact probabilistic ranking approach for the
x-relation model, for both cases of single-alternative x-tuples
only, i.e. x-tuples with only one uncertain instance, and
multi-alternative x-tuples. They proposed dynamic programming
based methods for the computation of uncertain ranking queries,
which have much lower costs than the previously best known
results. Furthermore, they proposed early stopping conditions for
accessing the tuples. Their methods for U-Top$k$ and U-$k$Ranks
queries have O($nlogk$) and O($kn^2$) time complexity,
respectively. The cost of the U-$k$Ranks algorithm is dominated by
the computation of the probability of each accessed tuple to be in
each of the $k$ first ranks. In this paper, we also use this as a
module of finding the object-rank probabilities. However, we
propose an improvement of their  O($kn^2$) algorithm that does the
same work in O($kn$) without increasing the memory requirements.

In a recent paper, Cormode et al. \cite{cly-09} reviewed
alternative top-$k$ ranking approaches for uncertain data,
including the U-Top$k$ and U-$k$Ranks queries, and argued for a
more robust definition of ranking, namely the {\em expected} rank
for each tuple (or x-tuple). This is defined by the weighted sum
of the ranks of the tuple in all possible worlds, where each world
in the sum is weighed by its probability. The $k$ tuples with the
lowest expected ranks are argued to be a more appropriate
definition of a top-$k$ query than previous approaches.
Nevertheless, we found by experimentation that such a definition
may not be appropriate for ranking objects (i.e., x-tuples), whose
instances have large variance (i.e., they are scattered far from
each other in space). In general, the result of this ranking
method is similar to the brute-force approach that would take the
mean of the instances for each object and rank these means. On the
other hand, approaches that take into consideration the rank
probabilities (e.g., U-$k$Ranks) would be more suitable for such
data. This is the reason why we focus on the computation of rank
probabilities in this paper. Another piece of recent related work
is \cite{si-09}, where the goal is to rank uncertain objects
(i.e., x-tuples) whose score is uncertain and can be described by
a range of values. Based on these ranges, the authors define a
graph that captures the partial orders among objects. This graph
is then processed to compute U-$k$Ranks and other queries.
Although this work has similar objectives to ours, it operates on
a different input, where the distribution of uncertain scores is
already known, as opposed to our work which dynamically computes
this distribution by performing a linear scan over the ordered
object instances.


\section{Probabilistic Ranking Framework}
\label{sec:probsimrankingmethods} \noindent Our framework
basically consists of two modules which are performed in an
iterative way:
\begin{itemize}
    \item The first module ({\em distance browsing}) incrementally retrieves the instances of all
objects in order of their distance to $q$. This can be achieved
with the help of a multi-dimensional index (e.g., an $R^*$-tree
index \cite{kssb-90}), using an incremental nearest neighbor
search algorithm \cite{hs-95}.

    \item The second module computes the probabilistic ranks $p\_rank_q(o_j,i)$ of each object
    instance $o_j$ reported from the distance browsing for all $1\leq i\leq
    k$. This step is the main focus of this
paper, because of its potentially high computational cost. A naive
solution would take into account all possible worlds that include
the instance and update the probabilities accordingly, however, as
discussed before, there already exists an efficient solution which
can perform this computation in quadratic time and linear space
\cite{uncertain-ylks-08}. In this paper, we improve this method to
a linear time and space complexity algorithm. The key idea is to
use the probabilistic ranks of the previous object instance to
derive those of the currently accessed one in O($k$) time. Section
\ref{subsec:IncrementalProbabilityComputation} has the details of
this improvement.
\end{itemize}

Our framework is illustrated in Figure
\ref{fig:ProbabilisticRankingSteps}. The computation of the
probability distributions is iteratively processed within a loop.
First, we initialize a distance browsing among the object
instances starting from a given query point $q$. For each object
instance fetched from the distance browsing (Module 1), we compute
the corresponding rank probabilities (Module 2) and update the
rank probability distributions generated from the probabilistic
ranking routine.

Note that the rank probabilities of the object instances (i.e.,
tuples in the x-relations model) reported from the second module
can be optionally aggregated into rank probabilities of the
objects (i.e., x-tuples in the x-relations model). The probability
that an uncertain vector object $o=\{(x_1,p_1),\ldots,(x_s,p_s)\}$
is at the $i^{th}$ ranking position according to the distance
between $o$ and a reference query object $q$ is
$$
p\_rank_q(o,i) = \sum_{l=1,\ldots,s}p_l\cdot p\_rank_q(x_l,i).
$$
Our framework can be used to compute the object-based rank
probabilities by maintaining a list of objects from which
instances have been seen so far and successively aggregate the
rank probabilities by means of the instance-based rank
probabilities reported from the framework.

Finally, in a postprocessing step, the rank probability
distributions computed by our framework can be used to generate a
definite ranking of the objects or object instances. The objective
is to find a non-ambiguous ranking where each object or object
instance is uniquely assigned to one rank. Here, one can plug-in
any user-defined ranking method that requires rank probability
distributions of objects in order to compute unique positions. In
Section \ref{sec:ProbabilisticRankingUncertainObjects}, we
illustrate this for several well-known probabilistic ranking
queries that make use of such distributions. In particular, we
demonstrate that by using our framework we can process such
queries in O($n log n + k\cdot n$) time\footnote{Note that the
O($n log n$) factor is due to pre-sorting the object instances
according to their distances to the query object. If we assume
that the instances are already sorted then our framework can
compute the probability distributions for the first $k$ rank
positions in O($k\cdot n$) time.}, as opposed to existing
approaches that require O($k\cdot n^2$) time.

\begin{figure*}[t]
        \centering
    \hspace{\fill}
        \includegraphics[width=0.8\columnwidth]{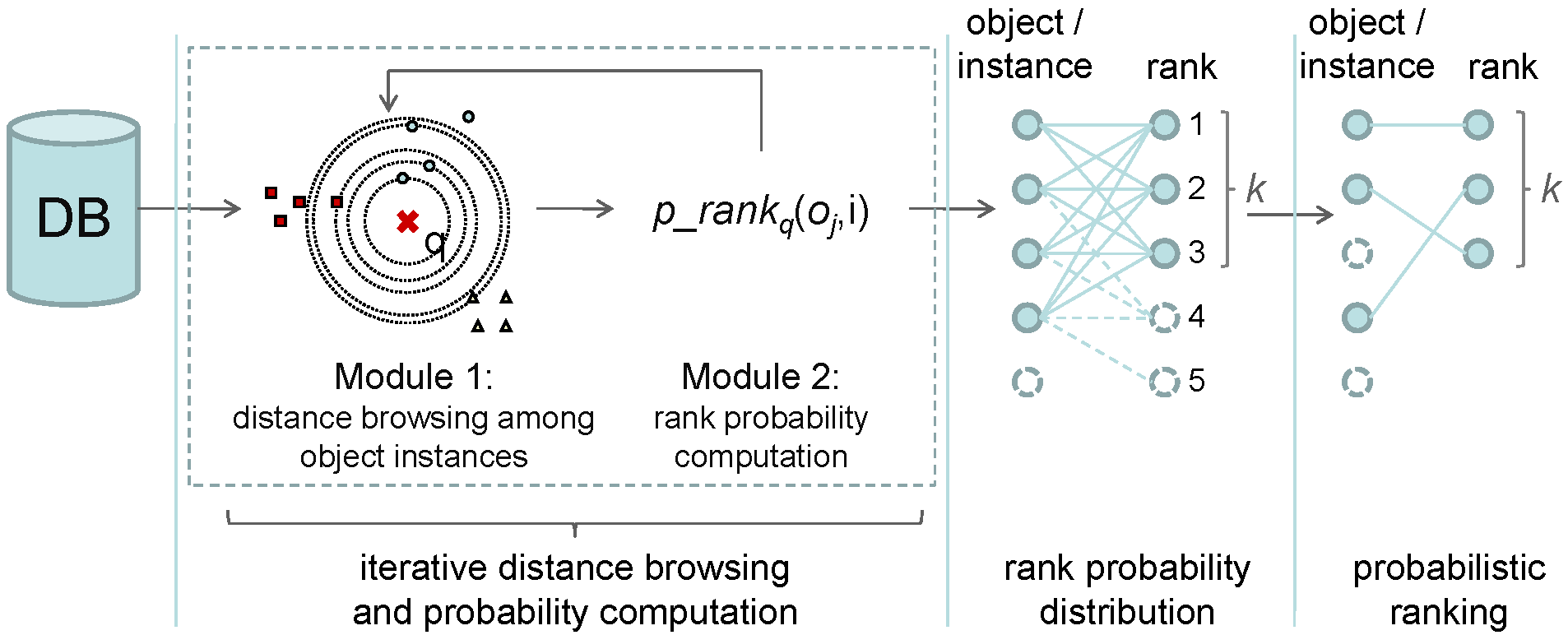}
    \hspace{\fill}
    \caption{Framework for probabilistic similarity ranking.}
    \label{fig:ProbabilisticRankingSteps}
\end{figure*}

\begin{table}[hbt] \centering
\begin{tabular}{|p{8mm}|p{60mm}|}
  \hline
  \multicolumn{2}{|c|}{Table of Notations} \\
  \hline
  $\DB$ & an uncertain database\\
  $N$ & the cardinality of $\DB$ \\
  $q$ & a query vector in respect to which a probabilistic ranking is
  computed\\
  $k$ & the ranking depth that determines the number of ranking positions of the
  ranking query result \\ $D$ & a distance browsing of $\DB$ with respect to
  $q$ \\ $o$ & an uncertain vector object corresponding to a finite set of
  alternative vector point instances\\
  $x$,$y$ & vector point instances\\
  $o_X,o_Y$ & the objects that the instances $x$ and $y$ respectively belong to\\
  $p$ & the probability that an uncertain vector object matches a given  vector position \\
  $\P_i(o)$ & the probability that object $o$ is assigned to the $i$-th ranking position $i$, i.e. the probability that
  exactly ($i$-1) objects in $(\DB \setminus \{o\})$ are closer to $q$ than $o$ \\
$\p_i(x)$ & the probability that an instance $x$ of object $o$ is
assigned to the $i$-th ranking position $i$, i.e.
 the probability that exactly $i-1$ objects in $(\DB \setminus \{o\})$ are
 closer to $q$ than $x$ \\ $AOL$ & \emph{Active Object List}\\
 $\SSS$ & a set of objects that have already been seen, i.e. the set that
 contains an object $o$ \emph{iff} at least one instance of $o$ has already been returned by the distance browsing
 $D$ \\
 $\SSSo$ & a set that contains the objects that have already been seen,
 except for object $o$, i.e. $\SSSo = \SSS \setminus \{o\}$ \\
 $P_{i,S,x}$ & the probability that exactly $i$
 objects $o$ $\in S$ are closer to $q$ than an object instance $x$\\
 $P_{x}(o)$ & the probability that object $o$ is closer to query point $q$
 than the vector point $x$; computable using Lemma \ref{lem:totalprobs}\\
  \hline
\end{tabular}
\caption{Table of notations used in this work.}
\label{table:notations}
\end{table}

\subsection{Dynamic Probability Computation}
\label{subsec:DynamicProbabilityComputation}

Consider an uncertain object $o$, defined by $m$ probabilistic
instances $o=\{(x_1,p_1),\dots,(x_m,p_m)\}$. The probability that
$o$ is assigned to a given ranking position $i$ is equal to the
chance that exactly $i-1$ objects $o' \in (\DB \setminus o)$ are
closer to the query object $q$ than the object $o$. This  can be
computed by aggregating the probabilities over all instances
$(x,p)$ of $o$ that exactly $i-1$ objects $o'$ are closer to $q$
than the instance $(x,p)$. Formally,

$$
\P_i(o) = \sum_{(x,p)\in o}\p_i(x|(o=x))
$$

where $\P_i(o)$ denotes the probability that o is assigned to the
ranking position $i$, i.e., exactly $i-1$ objects in $(\DB
\setminus o)$ are closer to $q$ than $o$. The conditional
probability $\p_i(x|(o=x))$ denotes the probability that exactly
$i-1$ objects in $(\DB \setminus o)$ are closer to $q$ than the
object instance $x$ of $o$, given that the object $o$ is in fact
at the location instance $x$. Since the conditional probability
$\p_i(x|(o=x))$ only depends on the objects $o' \in (\DB \setminus
o)$ and is thus independent of $o$, we obtain:
\begin{equation}
\label{equ:sumOfInstanceProbs} \P_i(o) = \sum_{(x,p)\in
o}(\p_i(x)\cdot \P(o=x)) = \sum_{(x,p)\in o}(\p_i(x)\cdot p)
\end{equation}

Based on the above formula we can compute the probabilities for an
object $o$ to be assigned to each of the ranking positions
$i\in\{1,\dots,k\}$ by computing the probabilities $P_i(x)$ for
all instances $(x,p)$ of $o$. As mentioned above, we perform this
computation in an iterative way, i.e., whenever we fetch a new
object instance $(x,p)$
we compute all probabilities $\p_i(x)\cdot p$ for all
$i\in\{1,\ldots,k\}$. Thereby, in a list we store the current {\em
probability state} according to all ranking positions
$i\in\{1,\ldots,k\}$ for each object for which we already have
accessed some instances and for which we expect to obtain further
instances in the remaining iterations. Whenever the probabilities
according to a new object instance are computed, we update the
list by adding the new probabilities to the current probability
state.

In the following, we show how to compute the probabilities
$\p_i(x)\cdot p$ for all $i\in\{1,\ldots,k\}$ for a given object
instance $(x,p)$ of an uncertain object $o$ which is assumed to be
currently fetched from the distance browsing (Step 1). For this
computation we first need, for all uncertain objects $o'\in\DB$,
the probability $P_x(o')$ that $o'$ is closer to $q$ than the
current object instance $x$. These probabilities are stored in an
\emph{Active Object List} $AOL$, which can easily be kept updated
due to the following obvious lemma:

\begin{lemma}
\label{lem:totalprobs} Let $q$ be the query object and $(x,p)$ be
the object instance of an object $o$ fetched from the distance
browsing in the current processing iteration. The probability that
an object $o'\neq o$ is closer to $q$ than $x$ is
$$
P_x(o')=\sum_{(x',p')\in o'} p'\mathrm{,}
$$
where (x',p') are the instances fetched in previous processing
iterations.
\end{lemma}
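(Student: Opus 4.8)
The plan is to argue directly from the possible-worlds semantics, since the statement (Lemma~\ref{lem:totalprobs}) is essentially a bookkeeping fact rather than a deep result. First I would fix the query object $q$ and the current object instance $(x,p)$ belonging to object $o$, and recall that the distance browsing returns instances in nondecreasing order of distance to $q$. The key observation is that an object $o' \neq o$ is ``closer to $q$ than $x$'' precisely in those possible worlds in which the realized instance of $o'$ is one of the instances $(x',p') \in o'$ that has already been fetched in a previous iteration; any instance of $o'$ not yet fetched lies at distance $\geq$ the current browsing distance, hence is not strictly closer to $q$ than $x$ (modulo tie-breaking, which I would address by adopting a fixed consistent tie-breaking rule in the distance order, so that ``already fetched'' coincides exactly with ``strictly precedes $x$ in the ranking order'').

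Next I would make the event decomposition precise. The event ``$o'$ is closer to $q$ than $x$'' is the disjoint union, over the already-fetched instances $(x',p') \in o'$, of the events ``$o'$ is realized at $x'$''. These events are mutually exclusive by the mutual-exclusiveness constraint on the instances of a single x-tuple (Definition~\ref{def:uncertainVectorObjects}): at most one instance of $o'$ occurs in any possible world. By the discrete probability distribution associated with $o'$, the event ``$o'$ is realized at $x'$'' has probability exactly $p'$. Summing over the disjoint cases, and using that the realization of $o'$ is independent of the realization of $o$ (so conditioning on $o$ being at $x$ does not change these probabilities), gives
\[
P_x(o') = \sum_{\substack{(x',p') \in o' \\ (x',p') \text{ fetched previously}}} p',
\]
which is the claimed formula.

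The only genuinely delicate point is the correspondence between ``fetched in a previous iteration'' and ``strictly closer to $q$ than $x$,'' i.e. the handling of ties in distance and of the instances of $o'$ that are fetched in the same iteration as $x$ or later. I would handle this by observing that the distance browsing imposes a total order consistent with distance, and that the lemma is to be read relative to that order; under that reading the equality is exact. Everything else is a one-line application of finite additivity over mutually exclusive events plus independence across distinct objects, so I do not expect any further obstacle — indeed the paper itself flags this as an ``obvious lemma,'' and the proof should be correspondingly short.
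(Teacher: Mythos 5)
Your argument is correct and is essentially the formalization of what the paper treats as self-evident: the paper offers no proof at all (it calls the lemma ``obvious'' and merely notes that one accumulates the per-object probability sums of instances seen so far), and your possible-worlds decomposition into mutually exclusive events ``$o'$ is realized at a previously fetched $x'$,'' combined with independence across objects, is exactly the intended justification. Your explicit handling of distance ties via the total order induced by the distance browsing is a small point of care the paper glosses over, but it does not change the approach.
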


Lemma \ref{lem:totalprobs} says that we can accumulate in overall
linear space the sums of probabilities of all instances for each
object, which have been seen so far and use them to compute
$P_x(o')$ given the current instance $x$ and any object $o'$ in
$\mathcal{D}$. In fact, we only need to manage in the list the
probabilities of those objects for which we already have accessed
an instance and for which we expect to access further instances in
the remaining iterations.

Now let us see how we can use list $AOL$ to efficiently compute
the probabilities $\p_i(x)$. Assume that $(x,p)\in o$ is the
current object instance reported from distance browsing. Let
$\SSS=\{o_1,\ldots,o_j\}$ be the set of objects which have been
seen so far, i.e. for which we already have seen at least one
object instance. We use the same observation as in
\cite{uncertain-ylks-08}. The probability that an object
$o\in\SSS$ appears at ranking position $i$ of the first $j$
objects seen so far only depends on the event that $i-1$ of the
remaining $j-1$ objects $p\in\SSS$ ($p\neq o$) appear before $o$,
no matter which of these objects fulfill this criterion. Let
$\SSSo$ denote the set of objects seen so far without object $o$,
i.e. $\SSSo=\SSS\setminus\{o\}$. Furthermore, let $P_{i,\SSSo,x}$
denote the probability that exactly $i$ objects of $\SSSo$ are
closer to $q$ than the object instance $x$. Now, we can formulate
the recursive function:
$$
P_{i,\SSSo,x}=P_{i-1,\SSSoo,x}\cdot P_x(o') +
P_{i,\SSSoo,x}\cdot(1-P_x(o'))\mathrm{,}
$$
where
\begin{equation}
\label{equ:dynamicProgramming} P_{0,\emptyset,x} = 1\mbox{ and }
P_{i,\SSSo,x} = 0\mbox{, iff } i>|\SSSo|\vee i<0.
\end{equation}
The correctness of Equation \ref{equ:dynamicProgramming} can be
shown by the following intuition: the event that $i$ objects of
$\SSSo$ are closer to $q$ than $x$ occurs if one of the following
conditions holds. In the case that an object $o'\in\SSSo$ is
closer to $q$ than $x$, then $i-1$ objects of $\SSSoo$ must be
closer to $q$. Otherwise, if we assume that object $o'\in\SSSo$ is
farther to $q$ than $x$, then $i$ objects of $\SSSoo$ must be
closer to $q$.

For each object instance $(x,p)$ reported from the distance
browsing, we have to apply the recursive function as defined
above.

Specifically, we have to compute for each instance $(x,p)$ the
probabilities $P_{i,\SSSo,x}$ for all
$i\in\{0,\ldots,\min\{k,|\SSSo|\}\}$ and for $j=|\SSSo|$ subsets
of $\SSSo$. If $n=|\DB|$, this has a cost factor of O($k\cdot n$)
per object instance retrieved from the distance browsing, leading
to a total cost of O($k\cdot n^2$). Assuming that $k$ is a small
constant, we have an overall runtime of O($n^2$).

In the following, we show how we can compute each $P_{i,\SSSo,x}$
in constant time by utilizing the probabilities computed for the
previously accessed instance.

\subsection{Incremental Probability Computation}
\label{subsec:IncrementalProbabilityComputation}

Let $(x,p_x)\in o_X$ and $(y,p_y)\in o_Y$ be two object instances
consecutively returned from the distance browsing. W.l.o.g. let
$(x,p_x)$ be returned before $(y,p_y)$. Each of the probabilities
$P_{i,\SSSY,y}$ ($i\in\{0,\ldots,|\SSSY|\}$) can be computed from
the probabilities $P_{i,\SSSX,x}$ in constant time. In fact, the
probabilities $P_{i,\SSSY,y}$ can be computed by considering at
most one recursion step backward.

The following three cases have to be considered. The first two are
easy to tackle and the third one is the most common and
challenging one.
\begin{description}
    \item \textbf{Case 1:} Both instances belong to the same object, i.e.
    $o_X = o_Y$.
    \item \textbf{Case 2:} Both instances belong to different objects,
    i.e. $o_X\neq o_Y$ and $(y,p_y)$ is the first
    returned instance of object $o_Y$.
    \item \textbf{Case 3:} Both instances belong to different objects,
    i.e. $o_X\neq o_Y$ and $(y,p_y)$ is not the first
    returned instance of object $o_Y$.
\end{description}

\begin{figure*}[t]
        \centering
    \subfigure[Case 1: Instances $(x,p_x)$ and $(y,p_y)$ belong to the same
    object.\label{fig:threeCases_case1}]{
        \includegraphics[width=0.29\columnwidth]{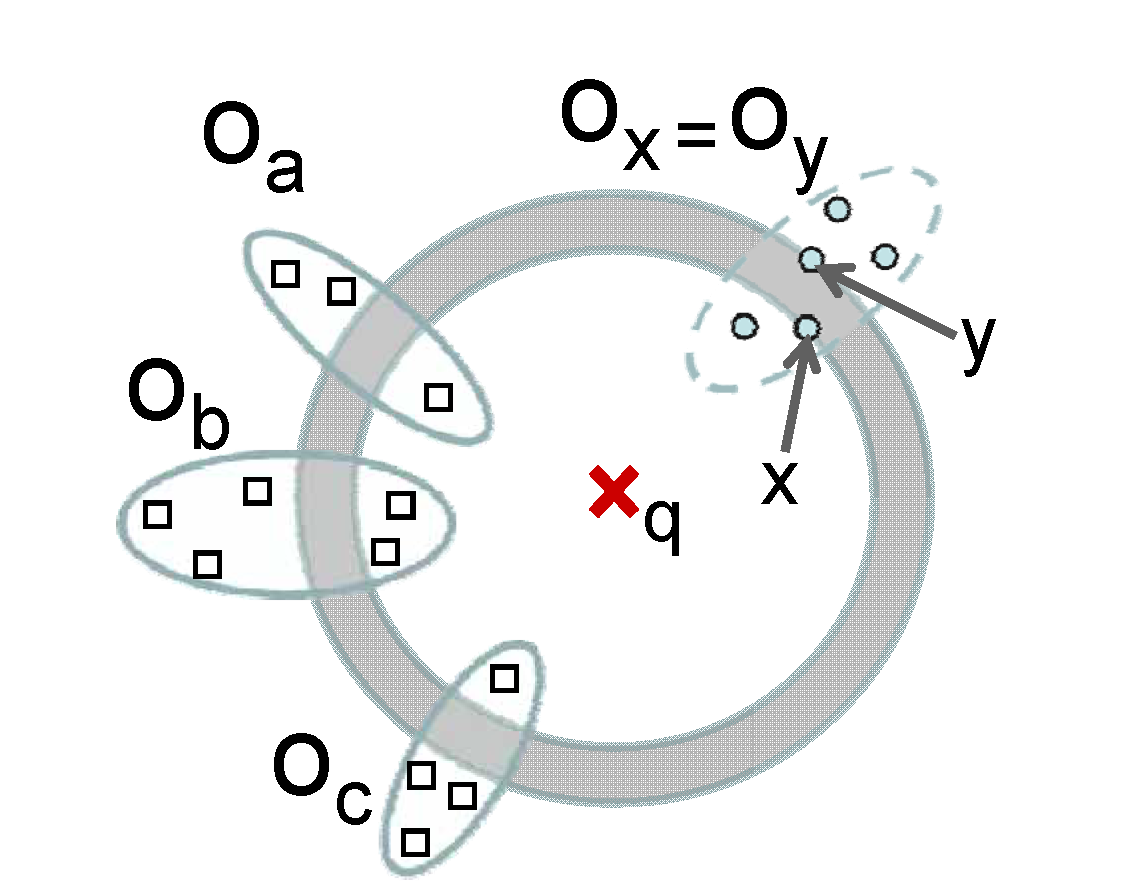}
        }
    \subfigure[Case 2: Instance $(y,p_y)$ is the first
    returned instance of object $o_Y$.\label{fig:threeCases_case2}]{
        \includegraphics[width=0.29\columnwidth]{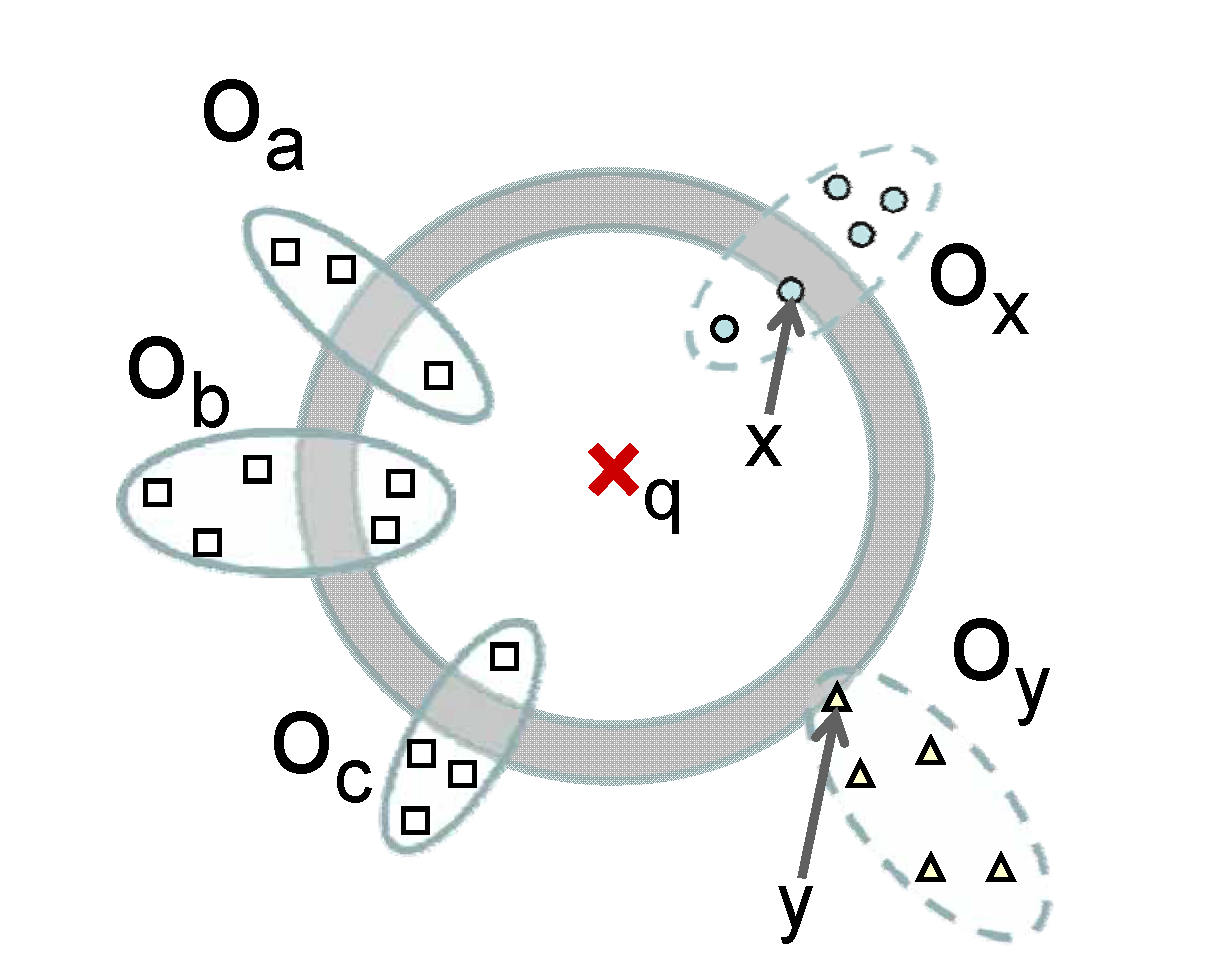}
        }
    \subfigure[Case 3: Instance $(y,p_y)$ is not the first
    returned instance of object $o_Y$.\label{fig:threeCases_case3}]{
        \includegraphics[width=0.29\columnwidth]{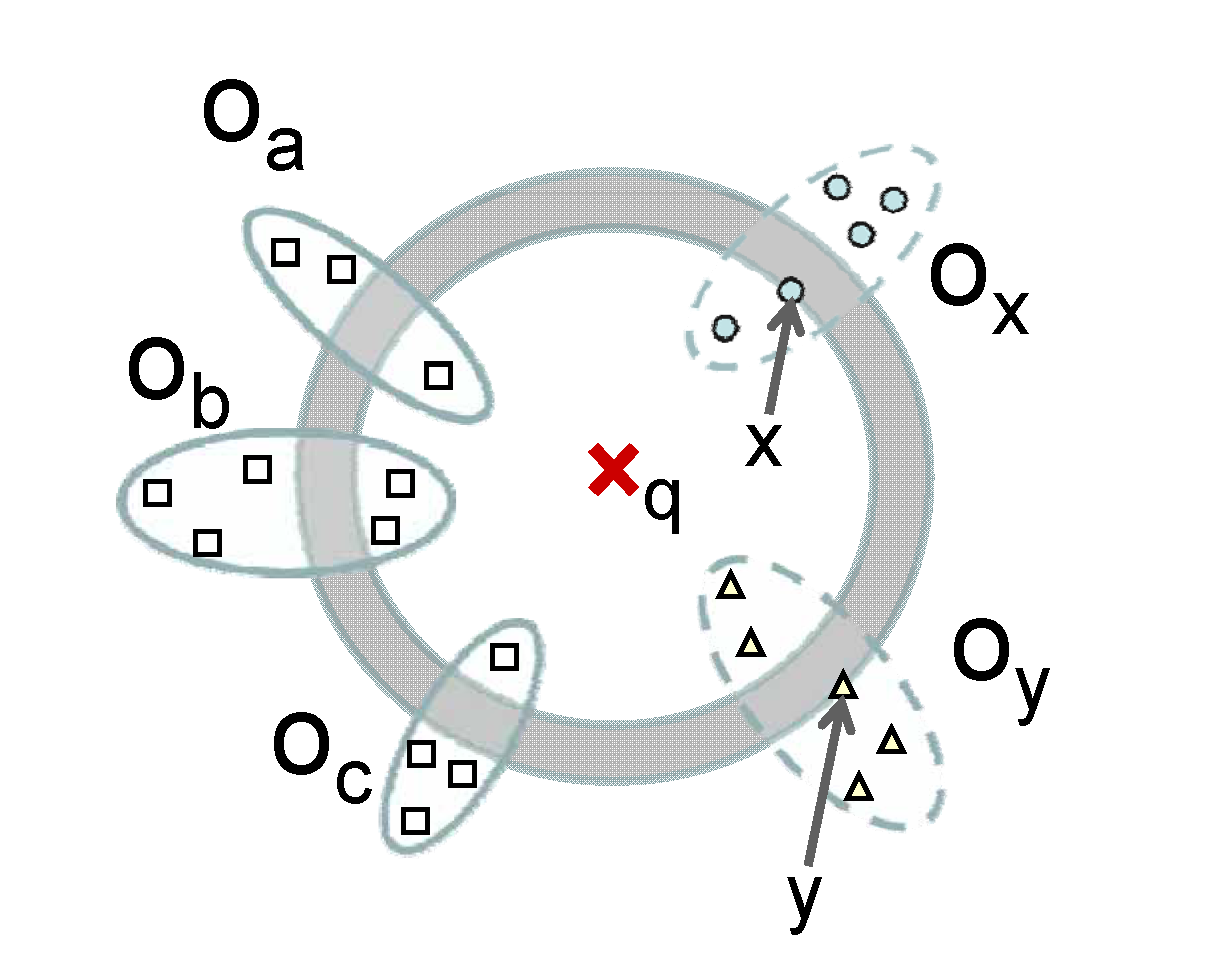}
        }
    \caption{Cases to be considered when updating the probabilities, assuming
    $x$ was the last processed instance and $y$ is the current one.}
    \label{fig:threeCases}
\end{figure*}

Now, we show how the probabilities $P_{i,\SSSY,y}$ for
$i\in\{0,\ldots,|\SSSY|\}$ can be computed in constant time
considering the above cases which are illustrated in Figure
\ref{fig:threeCases}.

In the first case (cf. Figure \ref{fig:threeCases_case1}), the
probabilities $P_x(o)$ and $P_y(o)$ of all  objects in $o \in
\SSSX$ are equal, because the instances of objects in $\SSSX$ that
appear within the distance range of $q$ of $y$ and within the
distance range of $x$ are identical. Since the probabilities
$P_{i,\SSSY,y}$ and $P_{i,\SSSX,x}$ only depend on the $P_x(o)$
for all objects $o\in\SSSX$, it is obvious that
$P_{i,\SSSY,y}=P_{i,\SSSX,x}$ for all $i$.

In the second case (cf. Figure \ref{fig:threeCases_case2}) we can
exploit the fact that $P_{i,\SSSX,x}$ does not depend on $o_Y$.
Thus, given the probabilities $P_{i,\SSSX,x}$, we can easily
compute the probability $P_{i,\SSSY,y}$ by incorporating the
object $o_X$ using the recursive Equation
\ref{equ:dynamicProgramming}:
$$
P_{i,\SSSY,y} =
$$
$$
P_{i-1,\SSSYX,y}\cdot P_y(o_X) + P_{i,\SSSYX,y}\cdot(1-P_y(o_X)).
$$
Since $\SSSYX = \SSSXY = \SSS -\{o_X,o_Y\}$ holds by definition
and no instance of any object in $\SSSXY$ appears within the
distance range of $q$ according to $y$ but not within the range
according to $x$, the following equation holds:
$$
P_{i,\SSSY,y} =
$$
$$
P_{i-1,\SSSXY,x}\cdot P_y(o_X) + P_{i,\SSSXY,x}\cdot(1-P_y(o_X)).
$$
Furthermore, $P_{i-1,\SSSXY,x} = P_{i-1,\SSSX,x}$, because $o_Y$
is not in the distance range according to $x$ and, thus,
$o_Y\notin\SSSX$. Now, the above equation can be reformulated:
\begin{equation}
\label{equ:caseTwo} P_{i,\SSSY,y} =
$$
$$
P_{i-1,\SSSX,x}\cdot P_y(o_X) + P_{i,\SSSX,x}\cdot(1-P_y(o_X)).
\end{equation}
All probabilities of the term on the right hand side in Equation
\ref{equ:caseTwo} are known and, thus, $P_{i,\SSSY,y}$ can be
computed in constant time assuming that the probabilities
$P_{i,\SSSX,x}$ computed in the previous step have been stored for
all $i\in\{0,\ldots,|\SSSX|\}$.

The third case (cf. Figure \ref{fig:threeCases_case3}) is the
general case which is not as straightforward as the previous two
cases and requires special techniques. Again, we assume that the
probabilities $P_{i,\SSSX,x}$ computed in the previous step for
all $i\in\{0,\ldots,|\SSSX|\}$ are known. Similar to Case 2, the
probability $P_{i,\SSSY,y}$ is equal to:
$$
P_{i,\SSSY,y} =
$$
\begin{equation}
\label{equ:P(i,S^oY,y)} P_{i-1,\SSSXY,x}\cdot P_y(o_X) +
P_{i,\SSSXY,x}\cdot(1-P_y(o_X)).
\end{equation}


Since the probability $P_y(o_X)$ is assumed to be known, now we
are left with the computation of $P_{i,\SSSXY,x}$ for all
$i\in\{0,\ldots,|\SSSXY|\}$ by again exploiting Equation
\ref{equ:dynamicProgramming}:
$$
P_{i,\SSSX,x}=
$$
$$
P_{i-1,\SSSXY,x}\cdot P_x(o_Y)+P_{i,\SSSXY,x}\cdot(1-P_x(o_Y))
$$
which can be resolved to
$$
P_{i,\SSSXY,x}=
$$
\begin{equation}
\label{equ:P(i,S^oX div
oY,x)}\frac{P_{i,\SSSX,x}-P_{i-1,\SSSXY,x}\cdot
P_x(o_Y)}{1-P_x(o_Y)}.
\end{equation}
With $i=0$ we have
$$
P_{0,\SSSXY,x}=\frac{P_{0,\SSSX,x}-P_{-1,\SSSXY,x}\cdot
P_x(o_Y)}{1-P_x(o_Y)}=
$$
$$
\frac{P_{0,\SSSX,x}}{1-P_x(o_Y)},
$$
because the probability $P_{-1,\SSSXY,x}=0$ by definition (cf.
Equation \ref{equ:dynamicProgramming}). The case $i=0$ can be
solved assuming that $P_{0,\SSSX,x}$ is known from the previous
iteration step.

With the assumption that all probabilities $P_{i,\SSSX,x}$ for all
$i\in\{1,\ldots,|\SSSX|\}$ and $P_x(o_Y)$ are available from the
previous iteration step, we can use Equation \ref{equ:P(i,S^oX div
oY,x)} to recursively compute $P_{i,\SSSXY,x}$ ($1\leq
i\leq|\SSSXY|$) using the previously computed $P_{i-1,\SSSXY,x}$.
Based on this recursive computation we obtain all probabilities
$P_{i,\SSSXY,x}$ ($0\leq i\leq|\SSSXY|$) which can used to compute
the probabilities $P_{i,\SSSY,y}$ for all $0\leq i\leq|\SSSXY|$
according to Equation \ref{equ:P(i,S^oY,y)}.

\subsection{Runtime Analysis}
\label{subsec:RuntimeProbabilisticRanking}

Building on this case-based analysis for the cost of computing
$P_{i,\SSSo,x}$ for the currently accessed instance $x$ of an
object $o$, we now prove that we can compute the rank
probabilities of all objects at cost $O(nk)$, where $n$ is the
number of object instances accessed. The following lemma suggests
that the incremental cost per object instance access is $O(k)$.

\begin{lemma}
\label{lem:constantTimeSpace} Let $(x,p_x)\in o_X$ and $(y,p_y)\in
o_Y$ be two object instances consecutively returned from the
distance browsing. W.l.o.g., let us assume that the instance
$(x,p_x)$ was returned in the last iteration in which we computed
the probabilities $P_{i,\SSSX,x}$ for all $0\leq i\leq|\SSSX|$.
The next iteration, in which we fetch $(y,p_y)$ the probabilities
$P_{i,\SSSY,y}$ for all $0\leq i\leq \min\{k,|\SSSY|\}$, can be
computed in $O(k)$ time and space.
\end{lemma}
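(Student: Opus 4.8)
The plan is to prove Lemma~\ref{lem:constantTimeSpace} by a case analysis mirroring the three cases introduced in Section~\ref{subsec:IncrementalProbabilityComputation}, and in each case exhibiting an explicit $O(k)$-time, $O(k)$-space procedure that turns the stored array $\{P_{i,\SSSX,x}\}_{0\le i\le\min\{k,|\SSSX|\}}$ into the array $\{P_{i,\SSSY,y}\}_{0\le i\le\min\{k,|\SSSY|\}}$. The key observation, already established in the preceding subsection, is that in every case $P_{i,\SSSY,y}$ is obtained from the previous array by \emph{at most one} backward recursion step (peeling off $o_Y$ from the ``$x$-side'') followed by \emph{at most one} forward recursion step (adding $o_X$ on the ``$y$-side''), so only a constant number of arithmetic operations per index $i$ is needed, and there are at most $k+1$ relevant indices.

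**First I would** dispatch Cases~1 and 2. In Case~1 ($o_X=o_Y$), the Active Object List is unchanged — $\SSSX=\SSSY$ and $P_x(o)=P_y(o)$ for every $o$ in it — so $P_{i,\SSSY,y}=P_{i,\SSSX,x}$ for all $i$; this is a memory copy (or no-op), trivially $O(k)$. In Case~2, $o_Y$ is seen for the first time, so $\SSSY=\SSSX\cup\{o_X\}$ while $o_Y\notin\SSSX$; by Equation~\ref{equ:caseTwo} each $P_{i,\SSSY,y}$ is a fixed linear combination of $P_{i-1,\SSSX,x}$ and $P_{i,\SSSX,x}$ with coefficients $P_y(o_X)$ and $1-P_y(o_X)$ (both available from $AOL$, which is maintained in $O(1)$ amortized time per instance via Lemma~\ref{lem:totalprobs}). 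Computing this for $i=0,\dots,\min\{k,|\SSSY|\}$ costs $O(1)$ per index, hence $O(k)$ overall, using a single auxiliary array of size $k+1$.

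**The main work — and the main obstacle — is Case~3.** Here I must first \emph{remove} $o_Y$ from the $x$-side array, i.e.\ compute $\{P_{i,\SSSXY,x}\}$ from $\{P_{i,\SSSX,x}\}$ via the division recurrence in Equation~\ref{equ:P(i,S^oX div oY,x)}, and then \emph{add} $o_X$ to get $\{P_{i,\SSSY,y}\}$ via Equation~\ref{equ:P(i,S^oY,y)}. The subtlety I would flag is twofold: (a) the intermediate array $\{P_{i,\SSSXY,x}\}$ is computed by a \emph{forward} recursion on $i$ (each value depends on the previously computed $P_{i-1,\SSSXY,x}$), so it must be done in increasing order of $i$, and since we only need indices up to $\min\{k,|\SSSY|\}$ we never touch more than $k+2$ entries — giving the $O(k)$ bound rather than $O(|\SSSX|)$; and (b) the division by $1-P_x(o_Y)$ is well-defined because $P_x(o_Y)<1$ whenever $o_Y$ has not yet been fully exhausted, which holds precisely because $(y,p_y)$ is a not-yet-seen instance of $o_Y$ (so at least the instance $y$ itself, with positive probability, lies beyond $x$); I would state this as the one non-formal point requiring a sentence of justification, possibly noting the degenerate handling when $P_x(o_Y)=0$ (then the removal step is the identity). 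Combining the peel step and the add step, Case~3 also runs in $O(k)$ time and $O(k)$ space.

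**Finally**, I would note that summing the per-instance $O(k)$ cost over the $n$ accessed instances, together with the $O(1)$-per-instance maintenance of $AOL$ and the $O(k)$-per-instance aggregation into the object-level distribution via Equation~\ref{equ:sumOfInstanceProbs}, yields the claimed $O(kn)$ total time; the space is $O(k)$ for the working arrays plus $O(n)$ for $AOL$ and the per-seen-object probability states, matching the memory footprint of the quadratic-time baseline. This last summation is routine and I would not belabour it — the substantive content of the lemma is entirely in the constant-per-index bookkeeping of Case~3.
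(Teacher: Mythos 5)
Your proposal is correct and follows essentially the same route as the paper's proof: a case analysis in which Case~1 is a no-op, Case~2 applies Equation~\ref{equ:caseTwo} in $O(k)$, and Case~3 first peels off $o_Y$ via Equation~\ref{equ:P(i,S^oX div oY,x)} and then applies Equation~\ref{equ:P(i,S^oY,y)}, each over at most $k+1$ indices. Your extra remark that the division is safe because $P_x(o_Y)\le 1-p_y<1$ is a worthwhile detail the paper leaves implicit, but it does not change the argument.
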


\begin{proof}
In Case 1, the probabilities $P_{i,\SSSX,x}$ and $P_{i,\SSSY,y}$
are equal for all $0\leq i\leq min\{k,|\SSSY)|\}$. No computation
is required ($O(1)$ time) and the result can be stored using at
most $O(k)$ space.

In Case 2, the probabilities $P_{i,\SSSY,y}$ for all $0\leq i\leq
min\{k,|\SSSY)|\}$ can be computed according to Equation
\ref{equ:caseTwo} taking $O(k)$ time. This assumes that the
$P_{i,\SSSX,x}$ have to be stored for all $0\leq
i\leq\min\{k,|\SSSY|\}$, requiring at most $O(k)$ space.

In Case 3, we first have to compute and store the probabilities
$P_{i,\SSSXY,x}$ for all $0\leq i\leq\min\{k,|\SSSXY)|\}$ using
the recursive function in Equation \ref{equ:P(i,S^oX div oY,x)}.
This can be done in $O(\min\{k,|\SSSXY)|\})$ time and space. Next,
the computed probabilities can be used to compute $P_{i,\SSSY,y}$
for all $0\leq i\leq\min\{k,|\SSSY)|\}$ according to Equation
\ref{equ:P(i,S^oY,y)} which again takes at most
$O(\min\{k,|\SSSXY)|\})$ time and space.
\end{proof}

After giving the runtime evaluation of the processing of one
single object instance, we are now able to extend the cost model
for the whole query process. According to Lemma
\ref{lem:constantTimeSpace}, we can assume that each object
instance can be processed in constant time if we assume that $k$
is constant. If we assume that the total number of object
instances in our database is linear to the number of database
objects we would get a runtime complexity which is linear in the
number of database objects, more exactly particular O($kn$) where
$n$ is the size of the database and $k$ the specified depth of the
ranking. Up to now, our model assumes that the preprocessing step
and the postprocessing step of our framework requires at most
linear runtime. Since the postprocessing step only includes an
aggregation of the results generated in Step 2 the linear runtime
complexity of Step 3 is guaranteed. Now, we want to examine the
runtime of the object instance ranking in Step 1. Similar to the
assumptions that hold for our competitors
\cite{uncertain-sic-07,uncertain-ylks-08,bkr-08} we can also
assume that the object instances are already sorted, which would
involve linear runtime cost also for Step 1. However, for the
general case where we have to initialize a distance browsing
first, the runtime complexity of Step 1 would increase to
O($n$log$n$). As a consequence, the total runtime cost of our
approach (including distance browsing) sums up to
O($n$log$n$+$kn$). An overview of the computation cost is given in
Table \ref{table:complexity}.

Regarding the space complexity of our approach, we have to store,
for each object in the database, a vector of length $k$ for the
probabilistic ranking of size $O(kn)$. In addition, we have to
store the $AOL$ of at most size $O(n)$, yielding a total space
complexity of $O(kn+n)=O(kn)$. Note that \cite{uncertain-ylks-08}
computes a different ranking (cf. Section
\ref{sec:ProbabilisticRankingUncertainObjects} for details) with a
space complexity of $O(n)$. To compute a probabilstic ranking
according to our definition, \cite{uncertain-ylks-08} requires
$O(kn)$ space as well.

\begin{table}\centering
\resizebox {8.4cm }{!}{ %
\begin{tabular}{|c|c|c|}
  \hline
  runtime table & no precomputed $D$ & precomputed $D$ \\
  \hline
  ours & O($n$log$n$+$kn$) & O($kn$) \\
  \cite{uncertain-ylks-08} & O($kn^2$) & O($kn^2$)  \\
  \cite{bkr-08} & exponential & exponential \\
  \cite{uncertain-sic-07} & exponential & exponential \\
  \hline
\end{tabular}}
\caption{Runtime complexity comparison of the best-known
approaches to our own aproach. } \label{table:complexity}

\end{table}

\section{Probabilistic Ranking Algorithm}
\label{sec:Algorithm}

\begin{figure}[t]
\centering \fbox{
  \begin{minipage}{0.98\columnwidth}
  \sf\small
  Probabilistic Ranking(\emph{D},\emph{q})\\[1.2ex]
  1\phantom{0}\quad\hspace*{0em}\emph{AOL} = $\emptyset$ \\
  2\phantom{0}\quad\hspace*{0em}result = Matrix of zeros // size:
  $\mid$instances$\mid$*k\hspace{18pt}\\
  3\phantom{0}\quad\hspace*{0em}\emph{p-rank}\_x = [0,\ldots,0] // Length \emph{k} \\
  4\phantom{0}\quad\hspace*{0em}\emph{p-rank}\_y = [0,\ldots,0] // Length \emph{k} \\
  5\phantom{0}\quad\hspace*{0em} \\
  6\phantom{0}\quad\hspace*{0em}\emph{y} = \emph{D.next} \\
  7\phantom{0}\quad\hspace*{0em}\emph{updateAOL(y)}\\
  8\phantom{0}\quad\hspace*{0em}\emph{p-rank}\_x[0]=1\\
  9\phantom{0}\quad\hspace*{0em}Add \emph{p-rank}\_x to the first line of
  result.\\
  10\quad\hspace*{0em}\textbf{FOR} (\emph{D} is not empty \textbf{AND} $\exists
  p\in$ \emph{p-rank}\_x: $p>0$)\\
  12\quad\hspace*{0.3cm}\emph{x} = \emph{y} \\
  13\quad\hspace*{0.3cm}\emph{y} = \emph{D.next} \\
  14\quad\hspace*{0.3cm}\emph{updateAOL(y)} \\
  15\phantom{0}\quad\hspace*{0em} \\
  16\quad\hspace*{0.3cm}\textbf{CASE 1: (c.f. Figure
  \ref{fig:threeCases_case1})}\\
  17\quad\hspace*{0.3cm}\textbf{IF} (\emph{$o_y$} =  \emph{$o_x$})\\
  18\quad\hspace*{0.6cm}\emph{p-rank}\_y =  \emph{p-rank}\_x \\
  19\quad\hspace*{0.3cm}\textbf{END-IF}\\
  20\phantom{0}\quad\hspace*{0em} \\
  21\quad\hspace*{0.3cm}\textbf{CASE 2: (c.f. Figure
  \ref{fig:threeCases_case2})}\\
  22\quad\hspace*{0.3cm}\textbf{ELS-IF} ($o_y
  \not \in AOL$)\\
  23\quad\hspace*{0.6cm}$P(o_x)$=AOL.getProb($o_x$)\\
  24\quad\hspace*{0.6cm}\emph{p-rank}\_y = \emph{dynamicRound}(\emph{p-rank}\_x,$P(o_x)$)])\\
  25\quad\hspace*{0.3cm}\textbf{END-IF}\\
  26\phantom{0}\quad\hspace*{0em} \\
  27\quad\hspace*{0.3cm}\textbf{CASE 3: (c.f. Figure
  \ref{fig:threeCases_case3})}\\
  28\quad\hspace*{0.3cm}\textbf{ELSE} \phantom{blablabla} // (\emph{$o_y$} !=
  \emph{$o_x$})\\
  29\quad\hspace*{0.6cm}$P(o_x)$=AOL.getProb($o_x$)\\
  30\quad\hspace*{0.6cm}$P(o_y)$=AOL.getProb($o_y$)\\
  31\quad\hspace*{0.6cm}\emph{adjustedProbs} =
  \emph{adjustProbs(prevProbs,$P(o_y)$)} \\
  32\quad\hspace*{0.6cm}\emph{p-rank}\_y =
  \emph{dynamicRound(adjustedProbs,$P(o_x)$)}\\
  33\quad\hspace*{0.3cm}\textbf{END-IF}\\
  34\phantom{0}\quad\hspace*{0em} \\
  35\quad\hspace*{0.3cm}Add
  \emph{p-rank}\_y to the next line of result.\\
  36\quad\hspace*{0.3cm}\emph{p-rank}\_x = \emph{p-rank}\_y \\
  37\quad\hspace*{0em}\textbf{END-FOR}\\
  38\quad\hspace*{0em}return \emph{result}\\
  39\quad\textbf{END} Probabilistic Ranking.\\
  \end{minipage}
} \caption{Pseudocode of our ranking algorithm.}
\label{fig:alg-rknn}
\end{figure}

\begin{figure}[t]
\centering \fbox{
  \begin{minipage}{0.98\columnwidth}
  \sf\small
  dynamicRound(\emph{oldRanking},\emph{$P(o_X)$})\\[1.2ex]
  1\quad\phantom{0}\hspace*{0em}\emph{newRanking} = [0,\ldots,0] // Length
  \emph{k} \\
  2\quad\phantom{0}\hspace*{0em}\emph{newRanking}[0] = \\
  3\quad\phantom{0}\hspace*{0.6cm}\emph{oldRanking}[0]*(1-\emph{$P(o_X)$})\\
  4\quad\phantom{0}\hspace*{0em}\textbf{FOR} \emph{i} =
  \emph{1},\ldots,\emph{k-1} \\
  5\quad\phantom{0}\hspace*{0.3cm}\emph{newRanking}[i] =\\
  6\quad\phantom{0}\hspace*{0.9cm}\emph{oldRanking}[i-1]*\emph{$P(o_X)$}\\
  7\quad\phantom{0}\hspace*{0.9cm}+\emph{oldRanking[i]}*(1-\emph{$P(o_X)$})
  \\
  8\quad\phantom{0}\hspace*{0em}\textbf{END-FOR}\\
  9\quad\phantom{0}\hspace*{0em} return \emph{newRanking}\\
  10\quad\textbf{END} dynamicRound.\\
  \end{minipage}
} \caption{Pseudocode of a single dynamic Iteration.}
\label{fig:dynamicRound}
\end{figure}

\begin{figure}[t]
\centering \fbox{
  \begin{minipage}{0.98\columnwidth}
  \sf\small
  adjustProbs(\emph{oldRanking},$P(o_X)$)\\
  1\quad\phantom{0}\hspace*{0em}\emph{adjustedRanking} = [0,\ldots,0] // Length
  \emph{k} \\
  2\quad\phantom{0}\hspace*{0em}\emph{adjustedProbs}[0] = \\
  3\quad\phantom{0}\emph{oldRanking}[0] / \emph{$P(o_X)$}\\
  4\quad\phantom{0}\hspace*{0em}\textbf{FOR} \emph{i} =
  \emph{1},\ldots,\emph{k-1} \\
  5\quad\phantom{0}\hspace*{0.3cm}\emph{adjustedProbs}[\emph{i}] =$\frac{oldRanking[i]-oldRanking[i-1]*P(o_X)}
  {(1-P(o_X))}$\\
  6\quad\phantom{0}\hspace*{0em}\textbf{END-FOR}\\
  7\quad\phantom{0}\hspace*{0em}return \emph{adjustedProbs}\\
  8\quad\phantom{0}\textbf{END} adjustProbs.\\
  \end{minipage}
} \caption{Pseudocode of the algorithm that computes the
$P_{i,\SSSXY,x}$ from the $P_{i,\SSSX,x}$ for all
$i\in\{0,\ldots,k-1\}$.} \label{fig:adjustProbs}
\end{figure}

The pseudocode of the algorithm for the probabilistic ranking is
illustrated in Figure \ref{fig:alg-rknn}, providing the
implementation details of the previously discussed steps. Our
algorithm requires a query object $q$ and a distance browsing
operator $D$ (cf. \cite{hs-95}), that allows us to iteratively
access the object instances sorted in ascending order of their
similarity distance to a query object.

First, we initialize the \emph{activeObjectList (AOL)} , a data
structure that contains one tuple $(o,p_o)$ for each object $o$
that
\begin{itemize}
  \item has previously been found in $D$, i.e. at least one instance of $o$ has been processed and
  \item has not yet been completely processed, i.e. at least one
  instance of $o$ has yet to be found,
\end{itemize}
associated with the sum $p_o$ of probabilities of all its
instances that have been found. The $AOL$ offers two
functionalities:
\begin{itemize}
  \item updateAOL(instance $i$): Adds to the probability of $i$ to $p_o$,
  where $o$ is the object that $i$ belongs to.
  \item getProb(object o): Returns $p_o$.
\end{itemize}

Note that it is mandatory that the position of a tuple $(o,p_o)$
can be found in constant time, in order to sustain the constant
time complexity of an iteration. This can be
\begin{itemize}
  \item approached by means of hashing or
  \item reached by giving each object $o$ the information about the location of
  its corresponding tuple $p_o$ at an additional space cost of $O(n)$.
\end{itemize}

We also keep the \emph{result}, a matrix that contains, for each
object instance $x$ that has been found and each ranking position
$i$, the probability $\P_i(x)$ that $x$ is located at ranking
position $i$. Note that this result is instance-based. In order to
get an object-based rank probability, we can aggregate intances
belonging to the same object, using Equation
\ref{equ:sumOfInstanceProbs}. Additionally, we initialize two
arrays \emph{p-rank}\_x and \emph{p-rank}\_y, each of length $k$,
which contain, at any iteration of the algorithm, the
probabilities $P_{i,\SSSX,x}$ and $P_{i,\SSSY,y}$ respectively,
for all $0\leq i\leq k$. $x \in o_X$ is the instance found in the
previous iteration and $y \in o_Y$ is the instance found in the
current iteration (see Figure \ref{fig:threeCases}).

In line 6, the algorithm starts by fetching the first object
instance, which is closest to the query $q$ in the database. A
tuple containing the corresponding object as well as the
probability of this instance is added to the {\it AOL}.

Then, the first position of \emph{p-rank}\_x is set to $1$ while
all other $k-1$ positions remain at $0$, because
$$
P_{1,\SSSCI,y}=\\P_{1,\emptyset,y}=1
$$
and
$$
P_{i,\SSSCI,y}=\\P_{i,\emptyset,y}=0
$$
for $i>1$ by definition (see Equation
\ref{equ:dynamicProgramming}). This simply reflects the fact that
the first instance is always on rank $1$. Note that
\emph{p-rank}\_y is implicitely assigned to \emph{p-rank}\_x here.

Then, the first iteration of the main algorithm begins by fetching
the next object instance from $D$. Now, we have do distinguish the
three cases explained in Section \ref{sec:probsimrankingmethods}.

In the first case (line 16), both the previous and the current
instance refer to the same object. As explained in Section
\ref{sec:probsimrankingmethods}, we have nothing to do in this
case, since $P_{i,\SSSX,x}$=$P_{i,\SSSY,y}$ for all $0\leq i\leq
k-1$.

In the second case (line 21), the current instance refers to an
object that has not been seen yet. As explained in Section
\ref{sec:probsimrankingmethods}, we only have to apply an
additional iteration of the DP algorithm (cf. Equation
\ref{equ:dynamicProgramming}). This \emph{dynamicRound} algorithm
is shown in Figure \ref{fig:dynamicRound} and is used here to
incorporate the probability that $o_x$ is closer to $y$ into
\emph{p-rank}\_y in a single iteration of the dynamic algorithm.

In the third case (line 27), the current instance relates to an
object that has already been seen. Thus the probabilities
$P_{i,\SSSX,x}$ depend on $o_Y$. As explained in Section
\ref{sec:probsimrankingmethods}, we first have to filter out the
influence of $o_Y$ on $P_{i,\SSSX,x}$ and compute
$P_{i,\SSSXY,x}$. This is performed by the \emph{adjustProbs}
algorithm in Figure \ref{fig:adjustProbs} utilizing the technique
explained in Section \ref{sec:probsimrankingmethods}. Using the
$P_{i,\SSSXY,x}$, the algorithm then computes the $P_{i,\SSSY,y}$
using a single iteration of the dynamic algorithm like in case
two.

At line 35, the computed ranking for instance $y$ is added to the
result. If the application (i.e. the ranking method) requires
objects to be ranked instead of instances, then \emph{p-rank}\_y
is used to incrementally update the probabilities of $o_y$ for
each rank.

The algorithm continues fetching object instances from the
distance browsing operator $D$ and repeats this case analysis
until either no more samples are left in $D$ or until an object
instance is found, that has no influence on the $k$ first ranking
positions.

\section{Probabilistic Ranking Approaches}
\label{sec:ProbabilisticRankingUncertainObjects}

The method proposed in Section \ref{sec:probsimrankingmethods}
efficiently computes for each uncertain object instance $o_j$ and
each ranking position $i$ ($0\leq i\leq k-1$) the probability that
$o_j$ has the $i^{th}$ rank. However, most applications require an
unique object ranking, i.e. each object (or object instance) is
uniquely assigned to exactly one rank. Various top-$k$ query
approaches have been proposed generating deterministic rankings
from probabilistic data which we call probabilistic ranking
queries. The question at issue is how our framework can be
exploited in order to significantly accelerate probabilistic
ranking queries. In the remainder, we show that our framework is
able to support and significantly boost the performance of the
state-of-the-art probabilistic ranking queries. Specifically, we
demonstrate this by applying state-of-the-art ranking approaches
including, U-$k$Ranks, PT-$k$ and \emph{Global top-k}.

Note, that the following ranking approaches are based on the
x-relation model \cite{bshw-06,abshnsw-06}. As mentioned before,
the x-relation model conceptionally corresponds to our uncertainty
model, where the object instances correspond to the tuples and the
uncertain vector objects correspond to the x-tuples. In the
following, we use the terms \emph{object instance} and
\emph{object}.

\subsection{Expected Score and Expected Ranks}
The \emph{Expected Score} and \emph{Expected Ranks} \cite{cly-09}
compute for each object instance its expected score (rank) and
rank the instances by this expected score (rank). \emph{Expected
Ranks} runs in $O(n\cdot log(n))$-time, thus outperforming exact
approaches that do not use any estimation. The main drawback of
this approach is that by using the expected value estimator,
information is lost about the distribution of the objects. In the
following, we will show how our framework can be used to
accelerate the remaining state-of-the-art approaches, including
U-$k$Ranks, PT-$k$ and \emph{Global top-k}, to $O(n\cdot log n +
kn)$ runtime.

\begin{figure}[t]
        \centering
    \hspace*{\fill}

        \includegraphics[width=0.25\columnwidth]{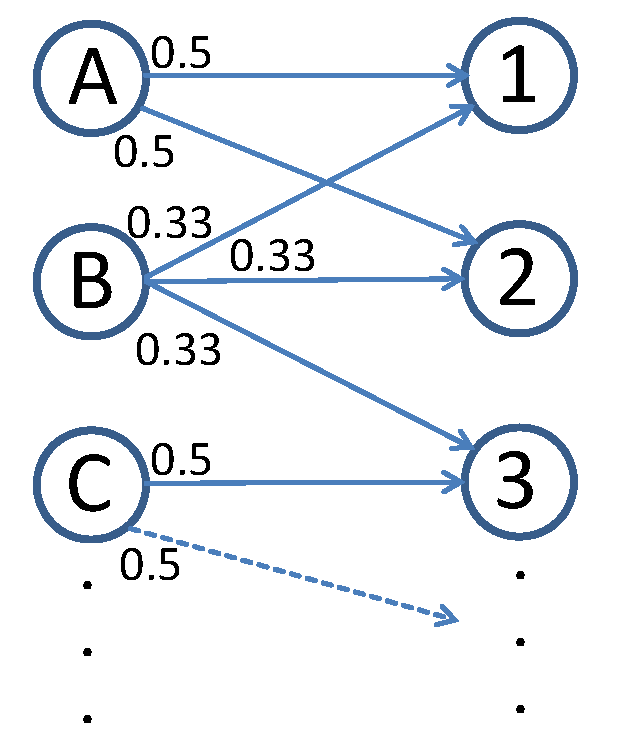}
    \caption{Small example extract of a probabilistic ranking as produced by our
    framework.}
    \label{fig:probrankingexample}
\end{figure}

\subsection{U-$k$Ranks}
The U-$k$Ranks \cite{uncertain-sic-07} approach reports the most
likely object instance at each rank $i$, i.e. the instance that is
most likely to be ranked $i$th over all possible worlds. This is
essentially the same definition as proposed in $PRank$ in
\cite{cl-08} in the context of distributions over spatial data.
The approach proposed in \cite{uncertain-sic-07} has exponential
runtime. The runtime has been reduced to $O(n^2k)$ time in
\cite{tkde-ylks-08}. Using our framework, the problem of
U-$k$Ranks can be solved in $O(n\cdot log(n) + nk)$ time using the
same space
complexity as follows:\\
Use the framework to create the probabilistic ranking in $O(n\cdot
log(n) + nk)$ as explained in the previous section. Then, for each
rank $i$, find the object instance $argmax_{j}(p\_rank_q(o_j,i))$
that has the highest probability of appearing at rank $i$ in
$O(nk)$. This is performed by (cf. Figure
\ref{fig:probrankingexample}) finding for each rank $i$ the object
instance which has the highest probability to be assigned to rank
$i$. Obviously, a problem of this problem definition is that a
single object instance $o_j$ may appear at more than one ranking
position, or at no ranking position at all. For example in
\ref{fig:probrankingexample}, object instance $A$ is ranked on
both ranks $1$ and $2$, while object instance $B$ is ranked
nowhere. The total runtime for U-$k$Ranks has thus been reduced
from $O(n^2)$ to $O(nlog(n)+kn)$, that is $O(n*log(n))$ if $k$ is
assumed to be constant.

\subsection{PT-$k$}
The \emph{probabilistic threshold top-k} query (PT-$k$)
\cite{hpzl-08} problem fixes the problem of the previous
definition by aggregating the probabilities of an object instance
$o_j$ appearing at rank $k$ or better. Given a user-specified
probability threshold $p$, PT-$k$ returns all instances, that have
a probability of at least $p$ of being at rank $k$ or better. Note
that in this definition, the number of results is not limited to
$k$ and depends on the threshold parameter $p$. The model of
PT-$k$ consists of a set of instances and a set of generation
rules that define mutually exclusiveness of instances. Each object
instance occurs in one and only one generation rule. This model
conceptionally corresponds to the x-relation model (with disjoint
x-tupels). PT-$k$ computes all result instances in $O(nk)$ time
while also assuming that the instances are already pre-sorted, thus having a total runtime of $O(nlog(n)+kn)$. \\
The framework can be used to solve the PT-$k$ problem in the
following way:\\ We create the probabilistic ranking in $O(nk)$ as
explained in the previous section. For each object instance $o_j$,
we compute the probability that $o_j$ appears at position $k$ or
better (in $O(nk)$). Formally, we return all instances $o_j\in\DB$
for which:
$$
\{o_j\in\DB| \sum_{i=1}^{k}p\_rank_q(o_j,i)>p\}
$$
As seen in Figure \ref{fig:probrankingexample}, this probability
can simply be computed by aggregating all probabilities of an
object instance to be ranked at $k$ or better. For example, for
$k=2$ and $p=0.5$, we get $A$ and $B$ as results. Note that for
$p=0.1$, further object instances may be in the result, because
there must be further object instances (from object instances that
are left out here for simplicity) with a probability greater than
zero to rank $1$ and rank $2$, since the probability of their
respective edges does not sum up to $1.0$ yet.

Note that our framework is only able to match, not to beat the
runtime of PT-$k$. However, using our approach, we can
additionally return the ranking order, instead of just the top-$k$
set.

\subsection{Global top-$k$}
\emph{Global top-k} \cite{zc-08} is very similar to PT-$k$ and
ranks the object instances by their top-$k$ probability, and then
takes the top-$k$ of these. This approach has a runtime of
$O(n^2k)$. The advantage here is that, unlike in PT-$k$, the
number of results is fixed, and there is no user-specified
threshold parameter. Here we can exploit the ranking order
information that we acquired in the PT-$k$ using our framework to
solve
\emph{Global top-k} in $O(n\cdot log(n)+kn)$ time: \\
We use the framework to create the probabilistic ranking in
$O(n\cdot log(n)+kn)$ as explained in the previous section. For
each object instance $o_j$, we compute the probability that $o_j$
appears at position $k$ or better (in $O(nk)$) like in PT-$k$.
Then, we find the $k$ object instances with the highest
probability in $O(k\cdot log(k))$.

\section{Experimental Evaluation}
\label{sec:Experiments}

We have performed extensive experiments to evaluate the
performance of our proposed probabilistic ranking approach
proposed in Section \ref{sec:probsimrankingmethods} w.r.t. the
database size ($|\DB|$) measured in the number of uncertain vector
objects, ranking depth ($k$) and degree of uncertainty (UD) as
defined below. In the following, the ranking framework is briefly
denoted by \PRaUD.

\begin{figure*}[ht]
\centering \subfigure[\Feifei]{
\includegraphics[width=0.29\columnwidth]{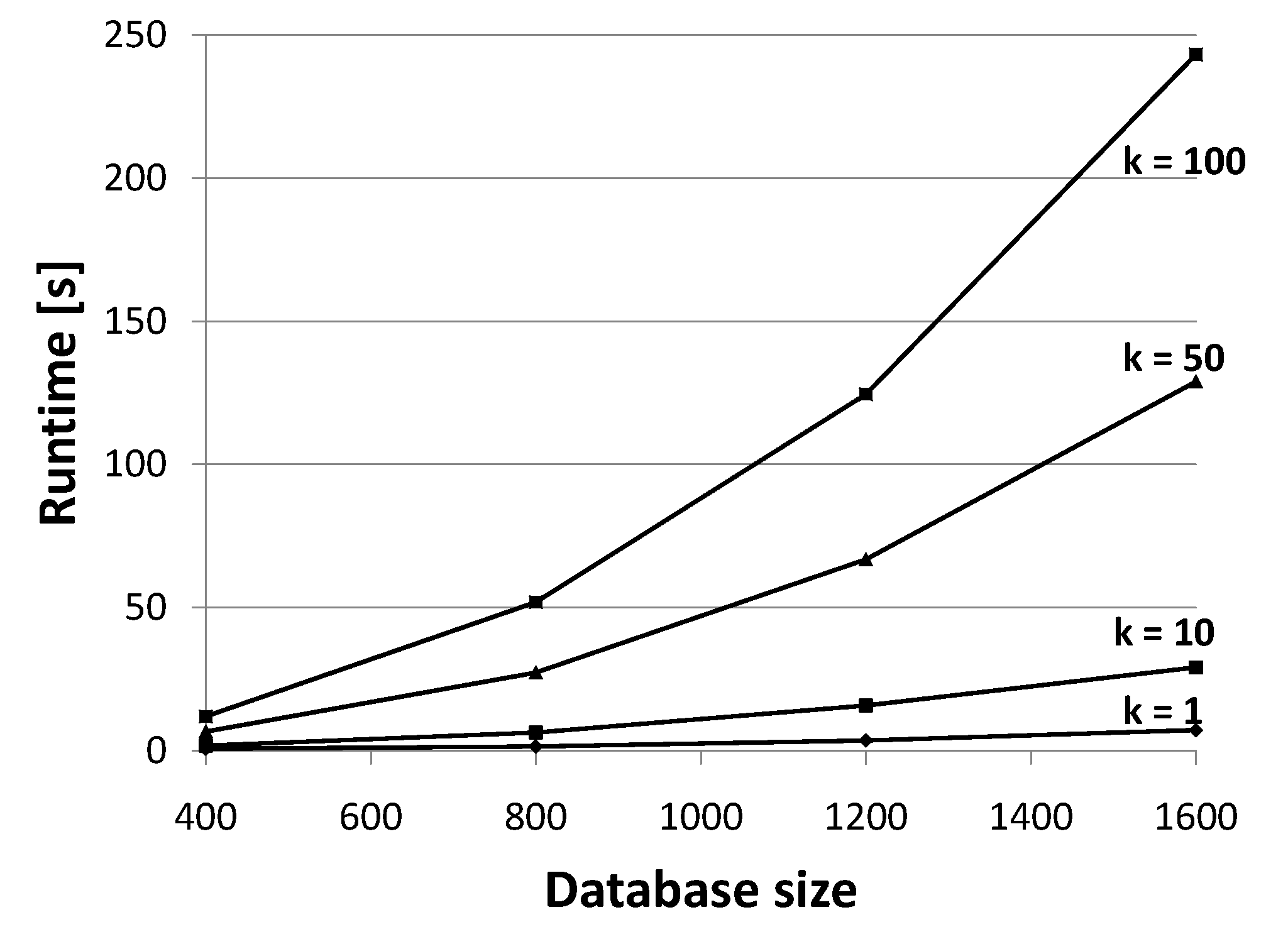}
\label{fig:eval_dbsize_env_feifei} }
\hspace{1em}\subfigure[\PRaUD]{
\includegraphics[width=0.29\columnwidth]{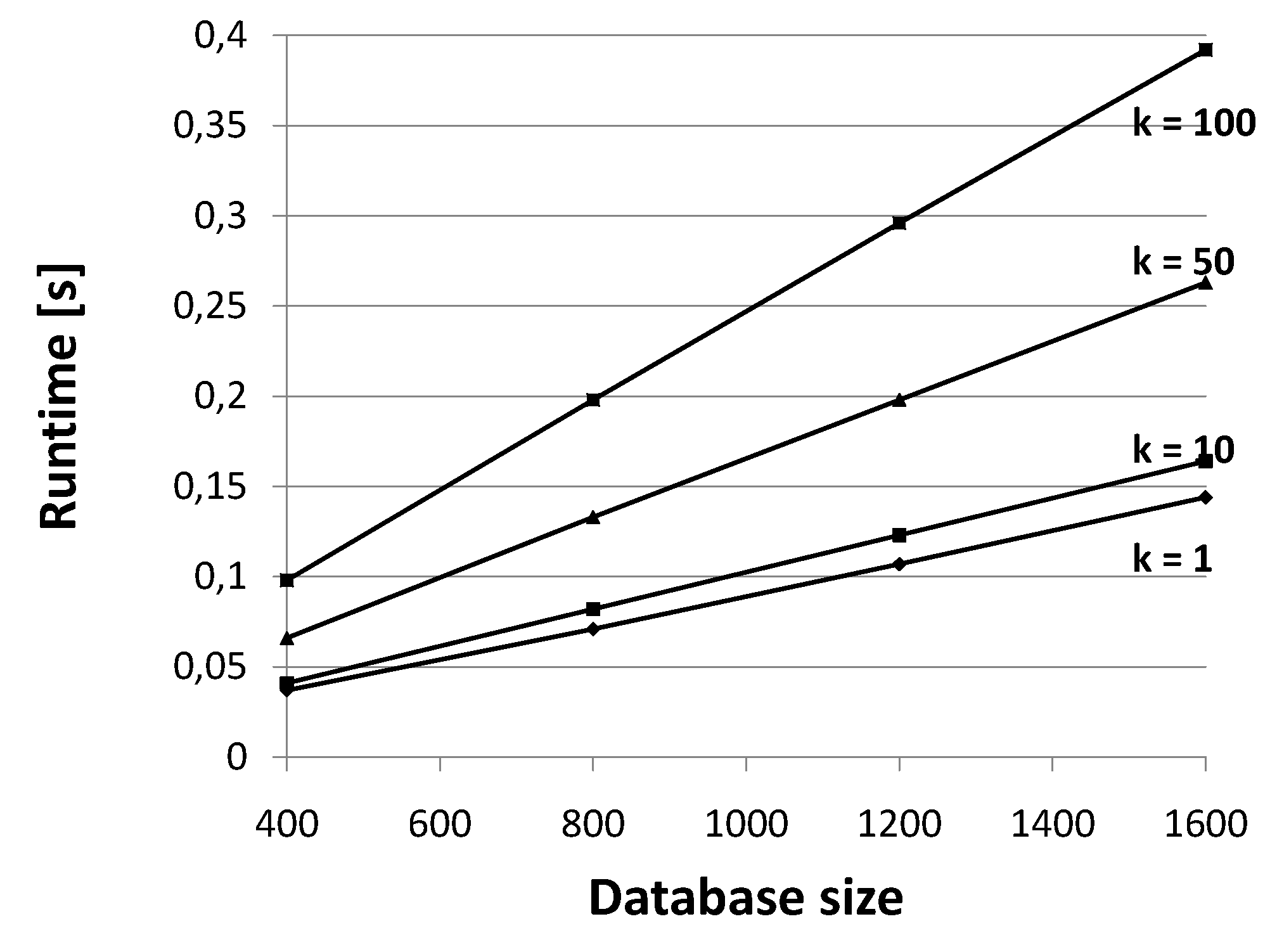}
\label{fig:eval_dbsize_env_praud} }
\hspace{1em}\subfigure[Speed-up gain w.r.t. $k$ on \emph{SCI}.]{
\includegraphics[width=0.29\columnwidth]{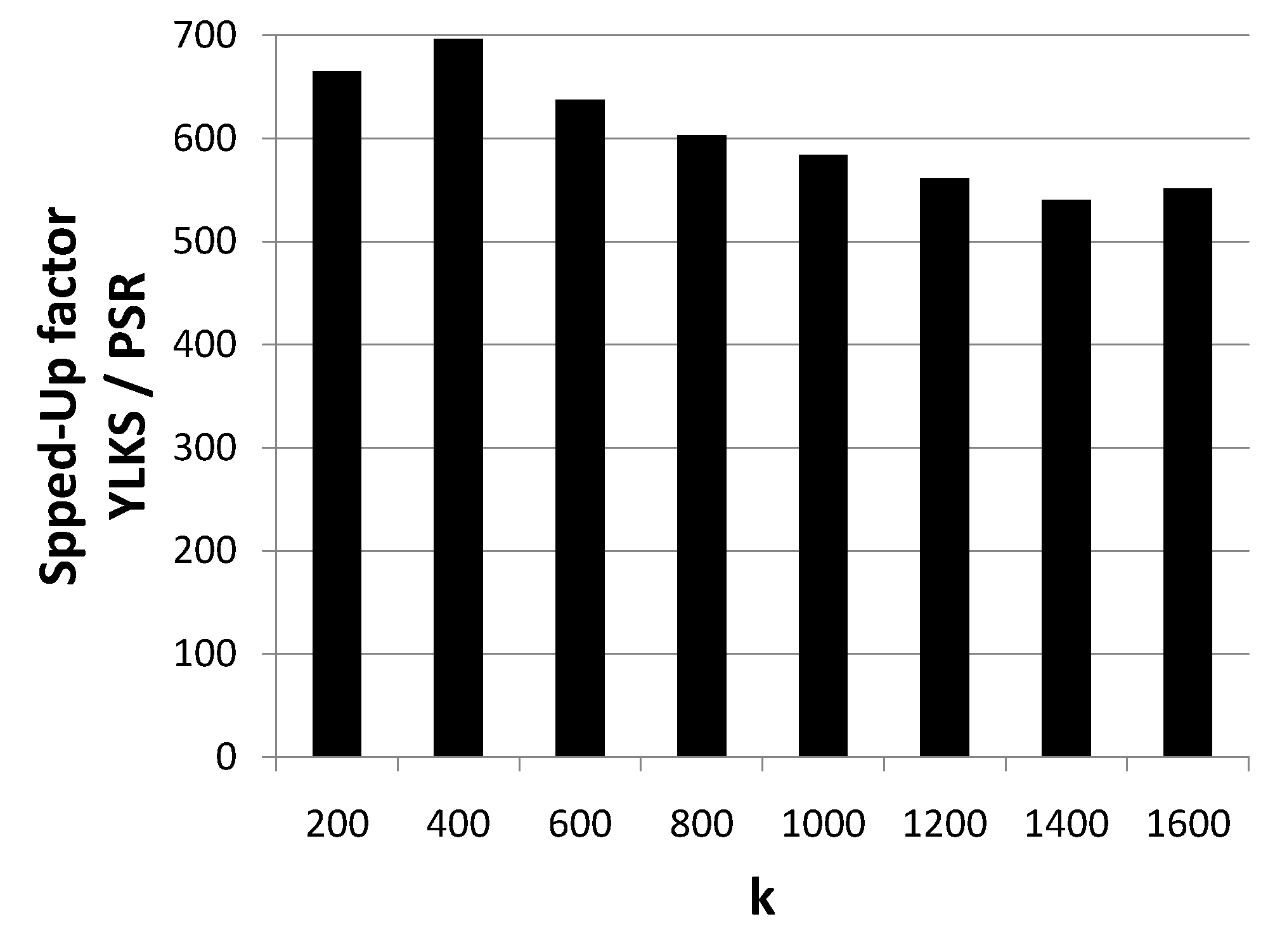}
\label{fig:eval_dbsize_env_speedup} }
\hspace{\fill}\caption{Scalability evaluated on \emph{SCI} having
different values for $k$.} \label{fig:eval_dbsize_env}
\end{figure*}

\subsection{Datasets and Experimental Setup}
\label{subsec:datasets}

The probabilistic ranking was applied to a scientific real-world
dataset \emph{SCI} and several artificial datasets \emph{ART\_X}
of varying size and degree of uncertainty. All datasets are based
on the discrete uncertainty model, i.e. each object is represented
by a collection of vector samples.

The \emph{SCI} dataset is a set of 1600 objects where each object
consists of 48 10-dimensional instances. Each instance corresponds
to a set of environmental sensor measurements of one single day
(one per 30 minutes) that consist of ten dimensions (attributes):
Temperature, humidity, speed and direction of wind w.r.t. degree
and sector, as well as concentrations of $CO$, $SO_2$, $NO$,
$NO_2$ and $O_3$. These attributes are normalized within the
interval [0,1] to give each attribute the same weight.

The \emph{ART\_1} dataset consists of up to 1,000,000 objects for
the scalability experiments. For the evaluation of the performance
w.r.t. the ranking depth and the degree of uncertainty we applied
a collection \emph{ART\_2} of datasets each composing 10,000
objects. Each object is represented by a set of 20 3-dimensional
instances. The \emph{ART\_2} datasets differs in the degree of
uncertainty ($UD$) the corresponding objects have. The degree of
uncertainty ($UD$) reflects the following distribution of object
instances: each uncertain vector object is assumed to be located
within an 3-dimensional hyper-rectangle. The object instances are
uniformly distributed within the corresponding rectangle. In the
following, we will refer to the side length of the rectangles as
\emph{degree of uncertainty} ($UD$). The rectangles are uniformly
distributed within a $10\times 10\times 10$ vector space.

The degree of uncertainty is interesting in our performance
evaluation since it is expected to have a significant influence on
the runtime. The reason is that a higher degree of uncertainty
obviously leads to an higher overlap between the objects which
influences the size of the active object list (AOL) (cf. Section
\ref{sec:Algorithm}) during the distance browsing. The higher the
object overlap the more objects are expected to be in the AOL at a
time. Since the size of the AOL influences the runtime of the rank
probability computation, a higher degree of uncertainty is
expected to lead to a higher runtime. This is experimentally
evaluated in Section \ref{subsec:uncertainty}.

\subsection{Scalability}
\label{subsec:runtime}

In this section, we give an overview of our experiments regarding
the scalability of \PRaUD. We compare our results to the dynamic
programming based rank probability computation used for the
U-kRanks method as proposed by Yi et al. in
\cite{uncertain-ylks-08}. This method, in the following denoted by
\Feifei, is the best approach currently known for solving the
(instance-based) rank probability problem (cf. Table
\ref{table:complexity}). For a fair comparison, we used the
\PRaUD\ framework to compute the same (instance-based) rank
probability problem as described in Section
\ref{sec:probsimrankingmethods}. Let us note that the cost
required to solve the object-based rank probability problem is
similar to that required to solve the instance-based rank
probability problem. This is because the former problem
additionally only requires to build the sum over all
instance-based rank probabilities which can be done on-the-fly
without additional cost. Furthermore, we can neglect the cost
required to build a final definite ranking (e.g. the rankings
proposed in Section
\ref{sec:ProbabilisticRankingUncertainObjects}) from the rank
probabilities, because they can be also computed on-the-fly by
simple aggregations of the corresponding (instance-based) rank
probabilities.

For the sorting of the distances of the instances to the query
point, we used a tuned quicksort adapted from \cite{bedu-93}. This
algorithm offers $O(n\cdot log(n))$ performance on many data sets
that cause other quicksorts to degrade to quadratic performance.

\begin{figure*}[ht]
\centering \subfigure[\Feifei]{
\includegraphics[width=0.29\columnwidth]{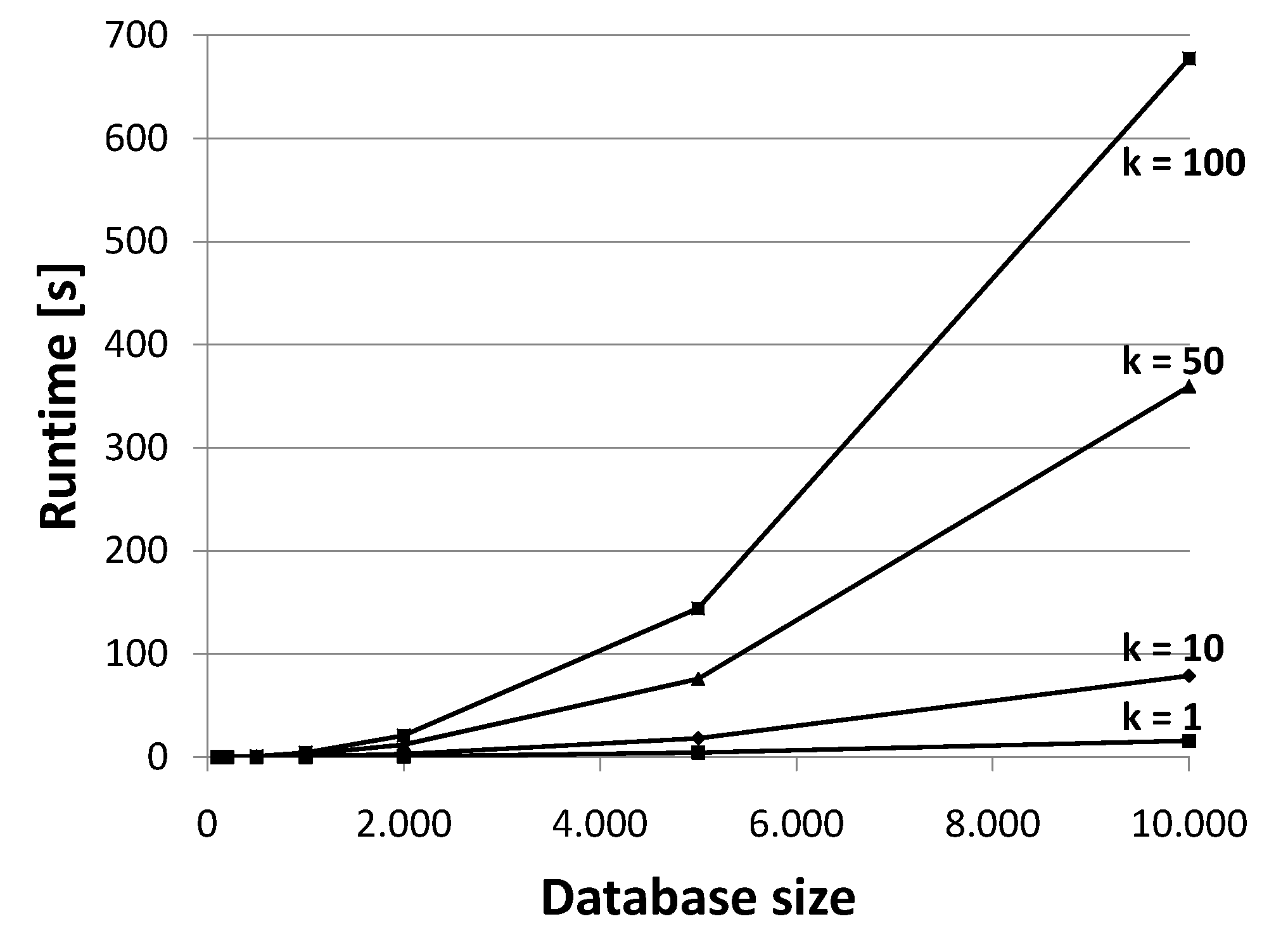}
\label{fig:eval_dbsize_art_feifei} }
\hspace{1em}\subfigure[\PRaUD]{
\includegraphics[width=0.29\columnwidth]{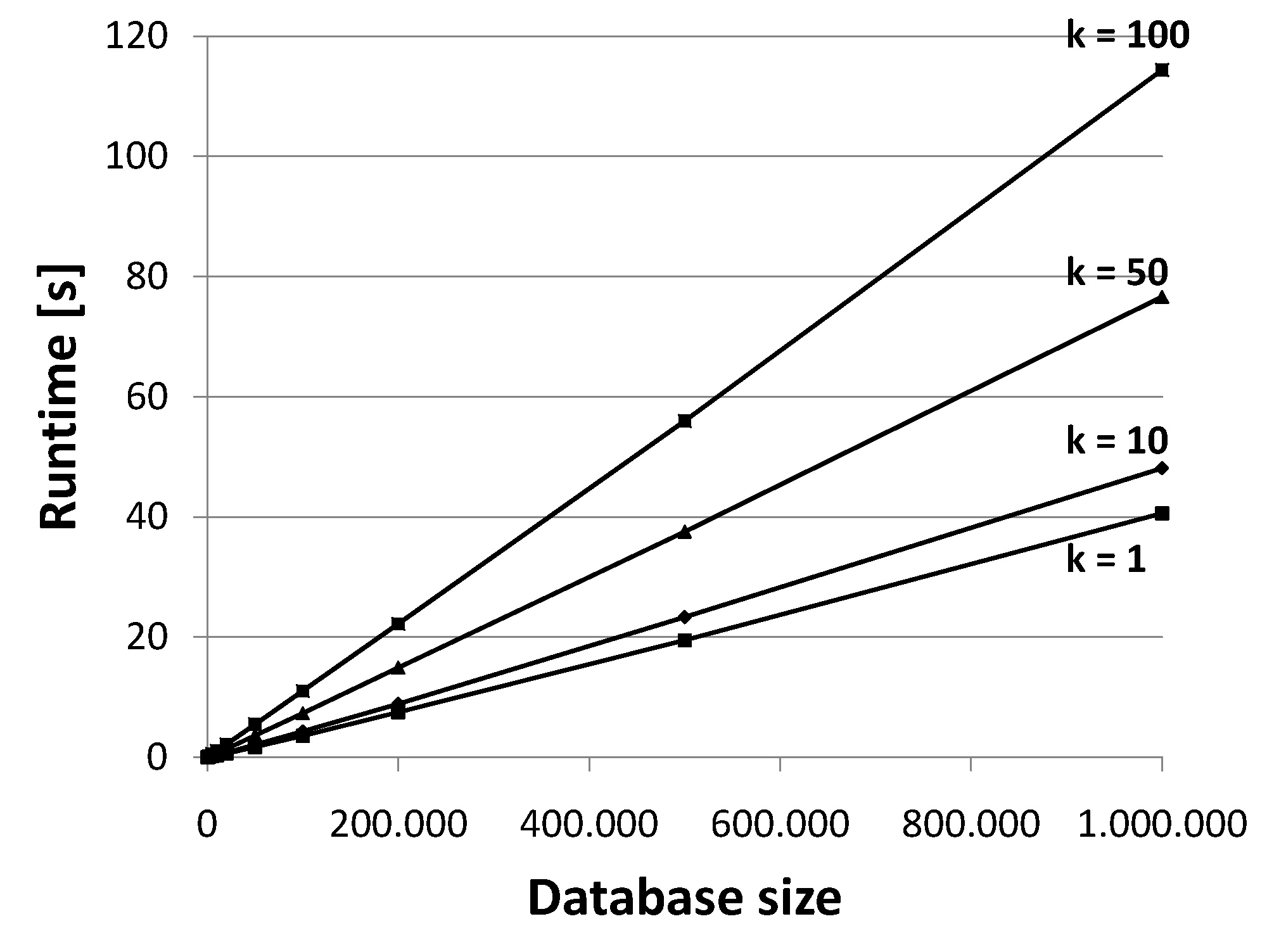}
\label{fig:eval_dbsize_art_praud} }
 \hspace{1em}\subfigure[Speed-up gain for an increasing $k$ grouped by an
 ascending number of objects in the database (\emph{ART\_1} dataset).]{
 \includegraphics[width=0.29\columnwidth]{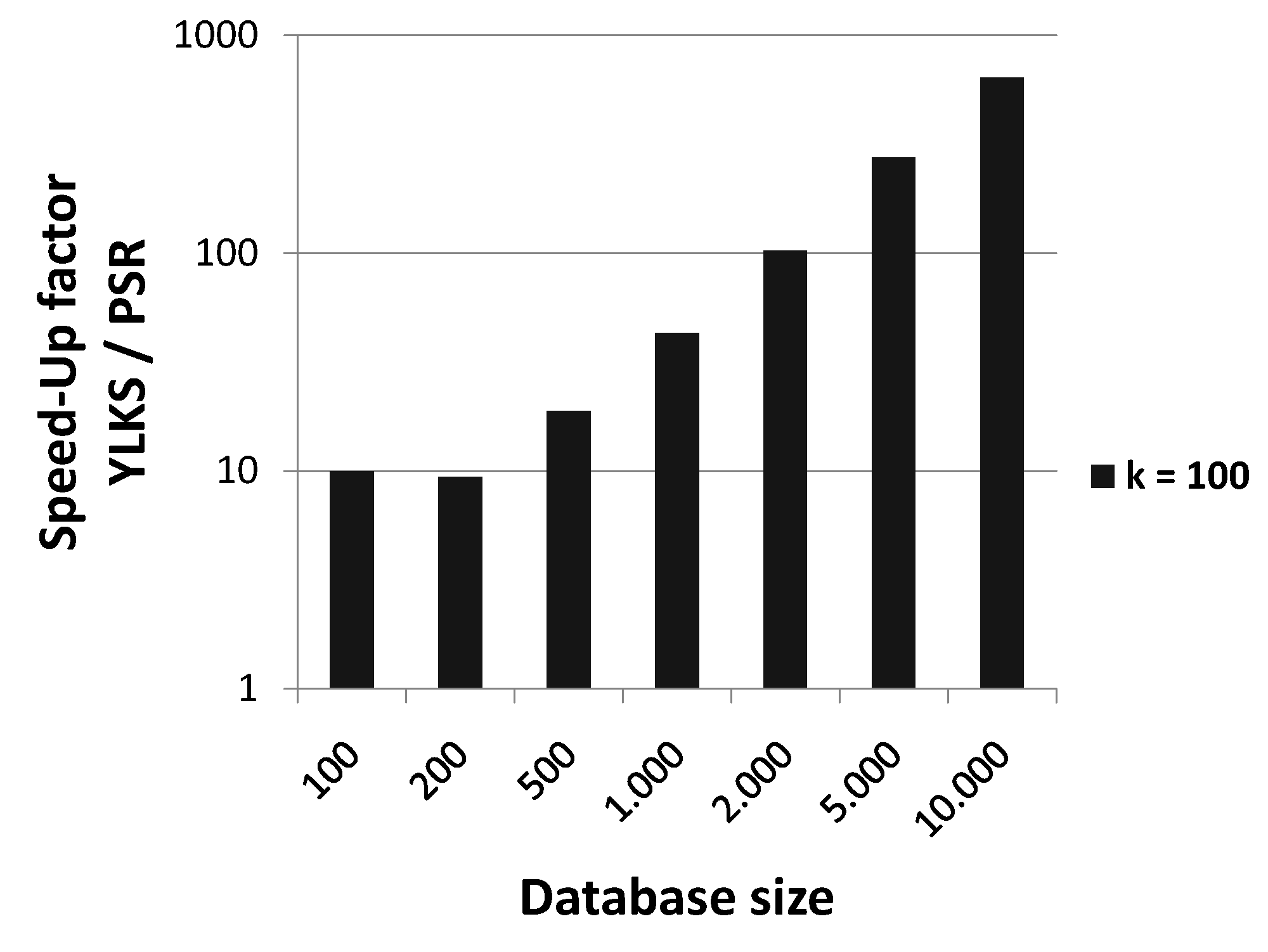}
 \label{fig:eval_dbsize_art_speedup}
 }
\hspace{\fill}\caption{Scalability evaluated on \emph{ART\_1}
w.r.t. $k$.} \label{fig:eval_dbsize_art}
\end{figure*}

The results of our first scalability tests on the real-data set
\emph{SCI} are depicted in Figure \ref{fig:eval_dbsize_env}. It
can be observed in Figure \ref{fig:eval_dbsize_env_praud} that the
runtime of the probabilistic ranking using the \PRaUD\ framework
increases linearly in the database size, whereas \Feifei\ has a
runtime quadratic in the database size in the same parameter
settings (cf. Figure \ref{fig:eval_dbsize_env_feifei}). We can
also see that this effect persists for different settings of $k$.
Note that the effect of the $O(n\cdot log(n))$ sorting of the
distances of the instances is insignificant on this relatively
small dataset. The direct speed-up of the rank probability
computation using \PRaUD\ in comparison to \Feifei\ is depicted in
Figure \ref{fig:eval_dbsize_env_speedup}. It shows for different
values of $k$, the speed-up factor, that is defined as the ratio
$\frac{runtime(\Feifei)}{runtime(\PRaUD)}$ describing the
performance gain of \PRaUD\ vs. \Feifei. It can be observed that,
for a constant number of objects in the database ($|DB|=1600$),
the ranking depth $k$ has no impact on the speed-up factor. This
can be explained by the observation that both approaches scale
linear in $k$.

Next, we evaluate the scalability of the database size based on
the \emph{ART\_1} dataset. The results of this experiment are
depicted in Figure \ref{fig:eval_dbsize_art}. Figure
\ref{fig:eval_dbsize_art_praud} shows that we are able to perform
ranking queries in a reasonable time of less than 120 seconds,
even for very large database containing 1,000,000 and more
objects, each having 20 instances (thus having a total of
20,000,000 instances (tupels)). Note that the time required to
sort the instances (less than 10 seconds for all 1,000,000
objects) is still insignificant compared to the total query cost.
In Figure \ref{fig:eval_dbsize_art_feifei}, it can be observed,
that due to the quadratic scaling of the \Feifei\ algorithm, it is
inapplicable for relatively small databases of size $5000$ or
more. The direct speed-up of the rank probability computation
using \PRaUD\ in comparison to \Feifei\ for varying database size
is depicted in Figure \ref{fig:eval_dbsize_art_speedup}. Here, we
can see that the speed-up of our approach in comparison to
\Feifei\ increases linear with the size of the database which is
consistent with our runtime analysis in Section
\ref{sec:probsimrankingmethods}.

\begin{figure}[ht]
\centering
\includegraphics[width=0.5\columnwidth]{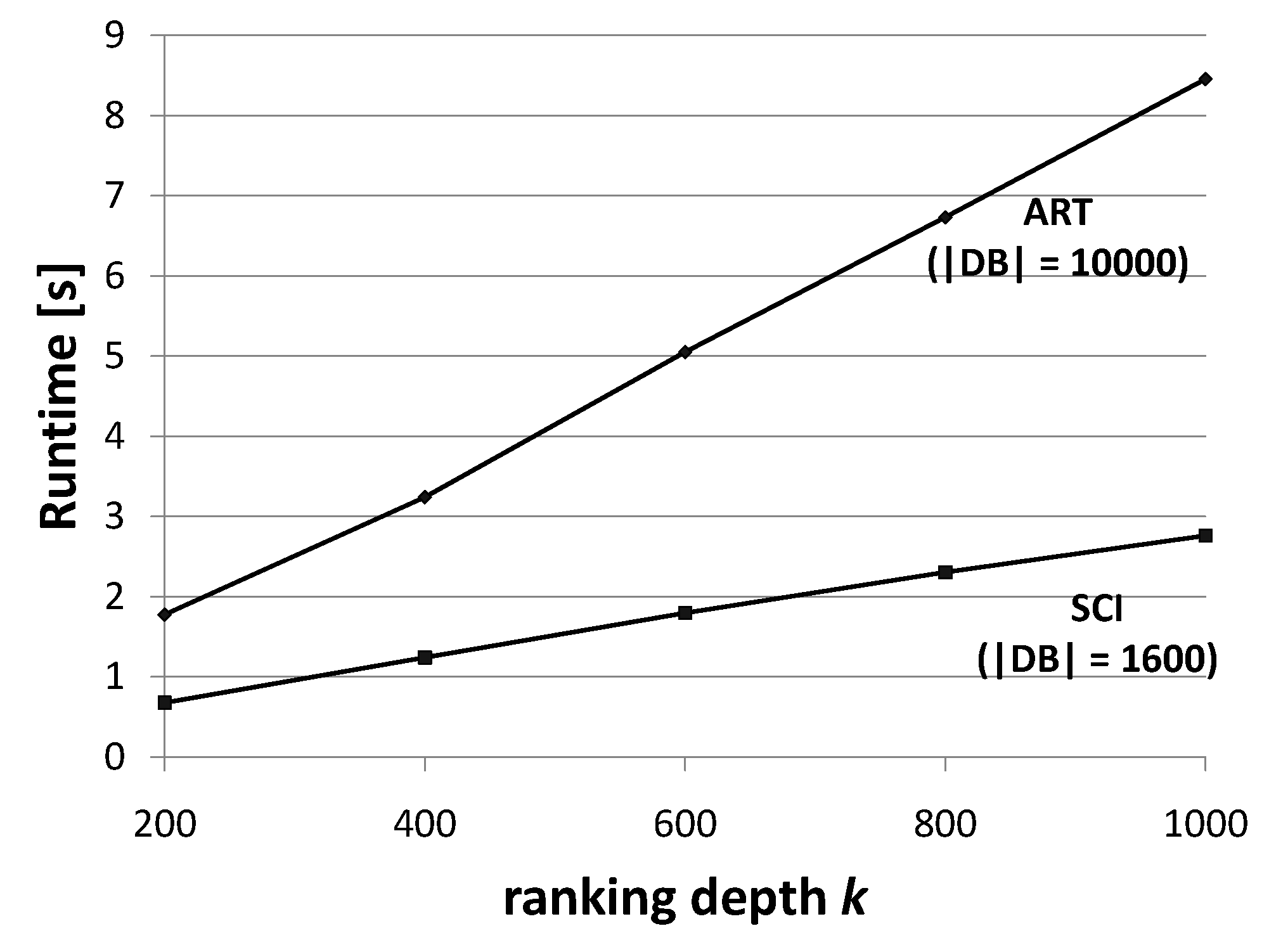}
\caption{Runtime test using \PRaUD\ on the datasets \emph{SCI} and
\emph{ART}.} \label{fig:eval_k_praud}
\end{figure}

\subsection{Ranking Depth $k$}
\label{subsec:rankingDepth}

The influence of the ranking depth $k$ on the runtime performance
of our probabilistic ranking method \PRaUD\ is studied in the next
experiment. As depicted in Figure \ref{fig:eval_k_praud}, where
the experiments were performed using both the \emph{SCI} and the
\emph{ART} dataset, the influence of an increasing $k$ yields a
linear effect on the runtime of \PRaUD, but does not depend on the
type of the dataset. This effect can be explained by taking into
consideration that each iteration of Case 2 or Case 3 requires a
probability computation for each ranking position $0\leq i \leq
k$.

\subsection{Influence of the Degree of Uncertainty}
\label{subsec:uncertainty}

\begin{figure*}[ht]
\centering \hspace{2em}\subfigure[Evaluation of \PRaUD\ by an
increasing uncertainty degree.]{
\includegraphics[width=0.4\columnwidth]{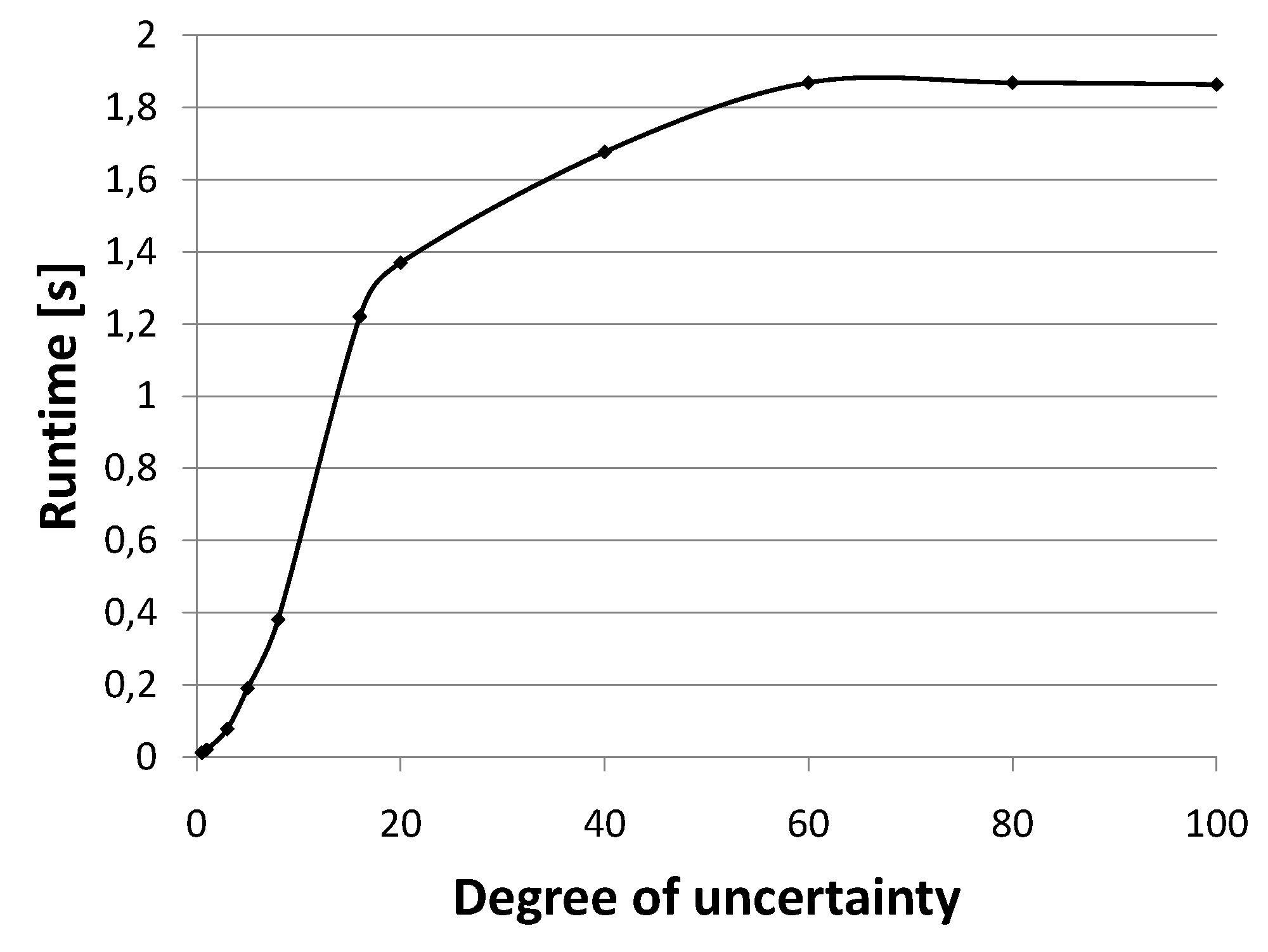}
\label{fig:eval_variance_praud} } \hspace{2em}\subfigure[\Feifei\
vs. \PRaUD\ in a logarithmic scale w.r.t. different
$\varnothing(|AOL|)$ values.]{
\includegraphics[width=0.4\columnwidth]{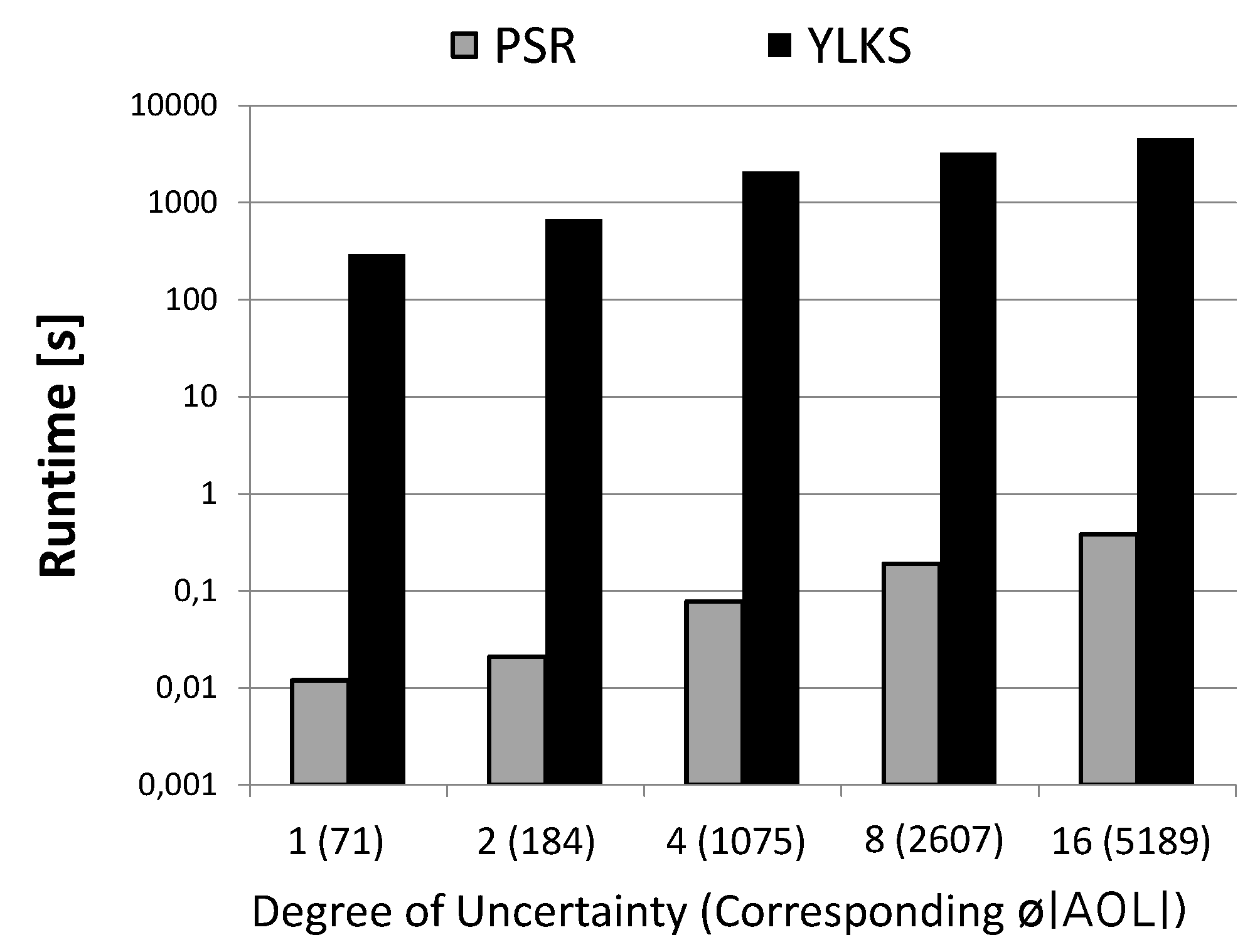}
\label{fig:eval_aol_feifei_praud} } \hspace{\fill}\caption{Runtime
test w.r.t. the degree of uncertainty.} \label{fig:eval_variance}
\end{figure*}

In the next experiment, we varied the uncertainty degree of
objects using the \emph{ART\_2} dataset. In the following
experiments, the ranking depth is set to a fixed value of $k =
100$. As previously discussed, a varying degree of uncertainty
leads to an increase of the overlap between the instances of the
objects and thus, objects will remain in the $AOL$ for a longer
time. The influence of the degree of uncertainty depends on the
probabilistic ranking algorithm. This statement is underlined by
the experiments shown in Figure \ref{fig:eval_variance}. It can be
seen in Figure \ref{fig:eval_variance_praud} that \PRaUD\ scales
superlinear in the degree of uncertainty until a maximal value is
reached. This maximal value is reached, when the degree of
uncertainty approximates a uniform distribution on the whole
vector space for all objects. With an increasing object
uncertainty, the average number of active objects contained in the
\emph{AOL} grows, because the increased overlap of the object
instances causes objects to stay in the \emph{AOL} for a longer
duration. A comparison of the runtime of \Feifei\ and \PRaUD\
w.r.t. the average \emph{AOL} size is depicted in Figure
\ref{fig:eval_aol_feifei_praud}.

\subsection{Summary}
\label{subsec:summary_experiments}

The experiments presented in this section show that the
theoretical analysis of our approach given in Section
\ref{sec:ProbabilisticRankingUncertainObjects} can be confirmed
empirically on both artificial and real-world data. The
performance studies showed that our framework computing the rank
probabilities indeed reduces the quadratic runtime complexity of
state-of-the-art approaches to linear. Note that the cost required
to pre-sort the object instances are neglected in our settings. It
could be shown that our approach scales very well even for large
databases. The speed-up gain of our approach w.r.t. the rank depth
$k$ has shown to be constant, which proofs that both approaches
scale linear in $k$. Furthermore, we could observe that our
approach is applicable for databases with a high degree of
uncertainty (i.e. the degree of variance of the instance
distribution).

\section{Conclusions}
\label{sec:conclusions} In this paper, we proposed a framework for
efficient computation of probabilistic similarity ranking queries
in uncertain vector databases. We introduced a novel concept that
achieves a log-linear runtime complexity in contrast to the
best-known existing approach that solve the same problem with
quadratic runtime complexity. Our concepts are theoretically and
empirically proved to be superior to all existing approaches. In
an experimental evaluation, we showed that our approach scales
very well and, thus, is applicable even for large databases. As
future work, we plan to extend the concepts proposed in this paper
to further uncertainty models.

\bibliographystyle{abbrv}
\bibliography{abbrevs,uncertain}

\end{document}